\documentclass[sigplan,screen]{acmart}

\bibliographystyle{ACM-Reference-Format}

\usepackage{booktabs}
\usepackage{subcaption}
\usepackage{microtype}
\usepackage{mathtools}
\usepackage{paralist}
\usepackage{extarrows} 
\usepackage{tikz}
\usetikzlibrary{arrows,automata,shapes.misc,shapes,snakes}
\usepackage{wrapfig}

\newcommand{\tsup}{\textstyle\sup}
\newcommand{\lt}{\langle}
\newcommand{\rt}{\rangle}
\newcommand{\dom}{\mathop{\mathrm{dom}}}
\newcommand{\kstar}{^{\textstyle *}}
\newcommand{\N}{\mathbb{N}}

\newcommand{\sem}[1]{\llbracket #1 \rrbracket}
\newcommand{\eps}{\varepsilon}
\renewcommand{\phi}{\varphi}
\renewcommand{\theta}{\vartheta}
\newcommand{\caret}{\text{\textasciicircum}}

\newcommand{\CReg}{\mathit{CReg}}
\newcommand{\aut}[1]{\mathcal{#1}}
\newcommand{\Tokens}{\mathbf{Tk}}
\newcommand{\Cfg}{\mathbf{C}}
\newcommand{\degree}{\mathsf{degree}}
\newcommand{\Slash}{\mathbin{/}}
\newcommand{\pp}{{+}{+}}

\copyrightyear{2022}
\acmYear{2022}
\setcopyright{acmlicensed}
\acmConference[PLDI '22]{Proceedings of the 43rd ACM SIGPLAN International Conference on Programming Language Design and Implementation}{June 13--17, 2022}{San Diego, CA, USA}
\acmBooktitle{Proceedings of the 43rd ACM SIGPLAN International Conference on Programming Language Design and Implementation (PLDI '22), June 13--17, 2022, San Diego, CA, USA}
\acmPrice{15.00}
\acmDOI{10.1145/3519939.3523456}
\acmISBN{978-1-4503-9265-5/22/06}

\begin{document}

\title{Software-Hardware Codesign for Efficient In-Memory Regular Pattern Matching}

\author{Lingkun Kong}
\affiliation{
  \institution{Rice University}            
  \country{USA}                    
}
\email{klk@rice.edu}
\authornote{These authors contributed equally.}

\author{Qixuan Yu}
\affiliation{
  \institution{Rice University}            
  \country{USA}                    
}
\email{qy12@rice.edu}
\authornotemark[1]

\author{Agnishom Chattopadhyay}
\affiliation{
  \institution{Rice University}            
  \country{USA}                    
}
\email{agnishom@rice.edu}

\author{Alexis Le Glaunec}
\affiliation{
  \institution{Rice University}            
  \country{USA}                    
}
\email{alexis.leglaunec@rice.edu}

\author{Yi Huang}
\affiliation{
  \institution{Rice University}            
  \country{USA}                    
}
\email{781013488@qq.com}          

\author{Konstantinos Mamouras}
\affiliation{
  \institution{Rice University}            
  \country{USA}                    
}
\email{mamouras@rice.edu}

\author{Kaiyuan Yang}
\affiliation{
  \institution{Rice University}            
  \country{USA}                    
}
\email{kyang@rice.edu}

\begin{abstract}
Regular pattern matching is used in numerous application domains, including text processing, bioinformatics, and network security. Patterns are typically expressed with an extended syntax of regular expressions. This syntax includes the computationally challenging construct of bounded repetition or counting, which describes the repetition of a pattern a fixed number of times. We develop a specialized in-memory hardware architecture that integrates counter and bit vector modules into a state-of-the-art in-memory NFA accelerator. The design is inspired by the theoretical model of nondeterministic counter automata (NCA). A key feature of our approach is that we statically analyze regular expressions to determine bounds on the amount of memory needed for the occurrences of bounded repetition. The results of this analysis are used by a regex-to-hardware compiler in order to make an appropriate selection of counter or bit vector modules. We evaluate our hardware implementation using a simulator based on circuit parameters collected by SPICE simulation in TSMC 28nm CMOS process. 
We find that the use of counter and bit vector modules outperforms unfolding solutions by orders of magnitude. Experiments concerning realistic workloads show up to 76\% energy reduction and 58\% area reduction in comparison to CAMA, a recently proposed in-memory NFA accelerator.
\end{abstract}

\begin{CCSXML}
<ccs2012>
   <concept>
       <concept_id>10003752.10003766</concept_id>
       <concept_desc>Theory of computation~Formal languages and automata theory</concept_desc>
       <concept_significance>500</concept_significance>
       </concept>
   <concept>
       <concept_id>10010583.10010786.10010787.10010788</concept_id>
       <concept_desc>Hardware~Emerging architectures</concept_desc>
       <concept_significance>500</concept_significance>
       </concept>
 </ccs2012>
\end{CCSXML}

\ccsdesc[500]{Theory of computation~Formal languages and automata theory}
\ccsdesc[500]{Hardware~Emerging architectures}

\keywords{automata theory, computer architecture}

\maketitle

\renewcommand{\shortauthors}{L. Kong, Q. Yu, A. Chattopadhyay, A. Le Glaunec, Y. Huang, K. Mamouras, and K. Yang}

\section{Introduction}
\label{sec:intro}

Regular pattern matching, where the patterns are expressed with finite-state automata or regular expressions, has numerous applications in text search and analysis \cite{AhoAC75}, network security \cite{YuCDLK2006}, bioinformatics \cite{RoyA2016, BoDSS2018}, and runtime verification \cite{BarringerGHS2004, BartocciDDFMNS2018}. 
Various techniques have been developed for matching regular patterns, many of which are based on the execution of deterministic finite automata (DFAs) or nondeterministic finite automata (NFAs). DFA-based techniques are generally faster, as the processing of an input element requires a single memory lookup, while NFA-based techniques are slower, as they involve extending several execution paths when processing one element. The advantage of NFAs over DFAs is that they are typically more memory-efficient, and there are cases where an equivalent DFA would unavoidably be exponentially larger \cite{MeyerF1971}.

Many applications require the processing of large and complex NFAs on real-time streams of data collected from sensors, networks, and various system traces. Energy efficiency and memory efficiency (in terms of the memory capacity or chip footprint needed for a given NFA) are highly desirable for both high-performance computing and battery-powered embedded applications.
NFA processing requires frequent, yet irregular and unpredictable, memory accesses on general-purpose processors, leading to limited throughput and high power on CPU and GPU architectures~\cite{ANMLZoo, LenjaniMH14, LiuTY11}. Field Programmable Gate Arrays (FPGAs) offer high speed through hardware-level parallelism, but are often bottlenecked by routing congestion~\cite{RahimiRS20, XieTV17} and their high power, area and cost prevent their use in mobile and embedded devices. Even with digital application-specific integrated circuit (ASIC) accelerators, the memory access bandwidth restricts the parallelism~\cite{tandonPS16, VanLJ2012}. 
The latest hardware technology that addresses these challenges is in-memory architecture, which processes the NFA transitions directly inside memories with massive parallelism and merged memory and computing operations. For instance, the Automata Processor (AP) from Micron~\cite{DlugoschBGLN2014AP, WangKA16} outperforms x86 CPUs by 256×, GPGPUs by 32×, and the digital accelerator XeonPhi by 62× in the ANMLZoo benchmark suite \cite{ANMLZoo, SubramaniyanAW17}.

Classical regular expressions (regexes) involve operators for concatenation $\cdot$, nondeterministic choice $+$, and iteration (Kleene's star) $\kstar$. They can be translated into NFAs whose size is linear in the size of the regex \cite{Thompson1968,Glushkov1961Abstract}. However, the regexes used in practice have several additional features that make them more succinct. One such feature is \emph{counting}, written as $r\{m,n\}$, which is also called \emph{constrained} or \emph{bounded repetition}. The pattern $r\{m,n\}$ expresses that the subpattern $r$ is repeated anywhere from $m$ to $n$ times. This counting operator is ubiquitous in practical use cases of regexes. For example, we have observed that in several datasets for network intrusion detection (Snort \cite{Snort} and Suricata \cite{Suricata}) and motif search in biological sequences (Protomata \cite{Prosite, RoyA2016}) counting arises in the majority of the patterns. The naive approach for dealing with counting operators is to rewrite them by unfolding. For example, $r\{n,n\}$ is unfolded into $r \cdot r \cdots r$ ($n$-fold concatenation) and results in an NFA of size linear in $n$ (and therefore can produce a DFA of size exponential in $n$).
Since $n$ can grow very large, dealing with counting is one of the main technical challenges for successfully using hardware-based approaches to execute practical regular patterns.

Existing in-memory NFA architectures use this naive unfolding method to handle counting operators. This leads to the use of a large number of STEs\footnote{STE stands for State Transition Element \cite{DlugoschBGLN2014AP}. It is a hardware element that roughly corresponds to the state of a homogeneous NFA. It contains a state bit (to indicate whether the state is active or not) and a memory array that represents a character class.
} to support counting. In AP~\cite{DlugoschBGLN2014AP} and CA (Cache Automaton)~\cite{SubramaniyanAW17}, each STE uses 256 memory bits for 8-bit symbols. In the latest Impala~\cite{SadrediniER20} and CAMA\footnote{CAMA abbreviates Content Addressable Memory (CAM) enabled Automata accelerator.}~\cite{cama} designs, each STE requires 16 to 32 memory bits. Even with this improvement, a modest counting operator with upper limit 1024 requires at least $16384$ memory bits, while the information required for implementing the operator may be only $10$ bits in some cases. Unfolding counting operators results in large memory and energy usage. To circumvent these problems, we explore software and hardware co-design for integrating counter and bit vector modules into a state-of-the-art in-memory NFA architecture.

Our design is inspired by an extension of NFAs with counter registers called nondeterministic counter automata (NCAs). In an NCA, a computation path involves not only transitions between control states, but also the use of a finite number of registers that hold nonnegative integers. Such automata are a natural execution model for regexes with counting, as the counters can track the number of repetitions of subpatterns. When the counters are bounded, NCAs are expressively equivalent to NFAs, but they can be exponentially more succinct \cite{MeyerF1971, StockmeyerM1973}. Similar to how an NFA is executed by maintaining the set of active states, an NCA is executed by maintaining a set of pairs, which we call \emph{tokens}, where the first component is the control state and the second component specifies the values of the counters. A key idea of our approach is that we can statically analyze an NCA to determine which states can carry a large number of tokens during execution. We call a control state \textbf{\em counter-unambiguous} if it can only carry at most one token and \emph{counter-ambiguous} if it can carry more than one. 
In the case of counter-unambiguity for a state $q$ with counter $x$, we know that we only need to record one counter value, which means that we need only one memory location whose size (in bits) is logarithmic in the range $M$ of possible counter values. 
In the case of counter-ambiguity for $q$ with counter $x$, we may have to record a large number of counter values (as large as $M$), and our insight is to use a bit vector $v$ of size $M$, where $v[i] = 1$ (resp., $v[i] = 0$) indicates the presence (resp., absence) of a token at $q$ with counter value $i$. So, identifying a state as counter-unambiguous enables a massive memory reduction for this state from $O(M)$ to $O(\log M)$.

We design a \textbf{\em static analysis algorithm} for checking the counter-ambiguity of NCAs and regexes by performing a systematic exploration of the space of reachable tokens to identify the existence of some input string for which two different tokens are placed on the same control state. This may lead to a large search space (exponential in the size of the regex), and the worst case is not easy to avoid since the problem is NP-hard. To handle difficult instances that involve large repetition bounds, we also provide an \emph{over-approximate} algorithm that gives an inconclusive output for some instances, while still being able to identify cases of counter-unambiguity for most instances from real benchmarks. By combining the exact and over-approximate algorithms, we can statically analyze within milliseconds the vast majority of regexes
in the benchmarks Snort \cite{Snort}, Suricata \cite{Suricata}, Protomata \cite{RoyA2016}, SpamAssassin \cite{SpamAssassin}, and ClamAV \cite{ClamAV}.

Using the insights about NCA execution mentioned earlier, we propose a \textbf{\em hardware design} that is based on existing in-memory NFA architectures (AP, CA, Impala, CAMA) augmented with (1) \emph{counter modules} for counter-unambiguous states, and (2) \emph{bit vector modules} for counter-ambiguous states. We use SPICE \cite{SPICE}, an industry-standard simulator for integrated circuits, to perform hardware simulation for the counters and bit vectors and to integrate them into the CAMA architecture.
We also provide a \textbf{\em compiler} that statically analyzes an input regex to determine counter-(un)ambiguity and then creates a representation of an automaton with counters and bit vectors using the MNRL format \cite{mnrl} that can be used to program the hardware. 
Several existing architectures like AP provide a counter module in their design, but they typically do not provide a compiler that translates regexes to hardware-recognizable programs.
Also, counter registers alone cannot deal with the challenging instances of counting. 
Compared with prior works that do not provide a bit vector module, this paper proposes a novel design that can systematically handle counting and ensure correct compilation in both the easy (requiring counters) and difficult (requiring bit vectors) cases.

We modified the open-source simulator VASim \cite{ANMLZoo} to simulate the hardware performance of our counter- and bit-vector-augmented CAMA design with implementation in TSMC 28nm process. In microbenchmarks, we evaluated the energy and area consumption of counters and bit vectors against their unfolded counterparts. The results show that our counter- and bit-vector-based design can reduce the energy usage by orders of magnitude and the area by large margins.
Furthermore, we evaluated the performance of the augmented CAMA design using the Snort \cite{Snort}, Suricata \cite{Suricata}, Protomata \cite{RoyA2016}, and SpamAssassin \cite{SpamAssassin} benchmarks. For applications involving regexes with large counting bounds, the results show as large as 76\% energy reduction and 58\% area reduction. For regexes with small counting bounds, the results show little to no overhead.

\paragraph{Contributions.}

The main contributions of this paper are summarized below:
\begin{asparaenum}[(1)]
\item
We use the notion of \emph{counter-unambiguity} in order to identify instances of bounded repetition that can be handled with a small amount of memory. We describe both an exact and an over-approximate \emph{static analysis} for counter-(un)ambiguity which, when combined, allow us to efficiently analyze the regexes that arise in several application domains.
\item
We propose a \emph{hardware design} that augments the prior NFA-based CAMA architecture \cite{cama} with counter and bit vector modules, which are inspired from the execution of NCAs and the classification of states as counter-(un)ambiguous. This architecture achieves substantial energy and area reductions compared to prior designs.
\item
We provide a \emph{compiler} that enables the high-level programming of the hardware using POSIX-style regexes. The compiler first performs the static analysis for counter-(un)ambiguity and then leverages the analysis results for producing a low-level description of the automaton.
\end{asparaenum}

\section{Preliminaries}
\label{sec:prelim}

In this section, we will give a brief overview of several well-known concepts, including regular expressions with counting and nondeterministic counter automata (NCAs). We are not interested in NCAs with unbounded counters (which can recognize non-regular languages), so we focus on NCAs with bounded counters. These automata are an appropriate model for implementing regular expressions with counting. Differently from most definitions of NCAs in the literature, we allow each control state of the automaton to have a different number of counters. This flexibility allows us to carefully bound the memory needed for NCA execution.

Let $\Sigma$ be a finite alphabet. A \textbf{\em regular expression} (or \emph{regex}) over $\Sigma$ is given by the grammar
$r ::=
 \eps \mid
 \sigma \mid
 r \cdot r \mid
 r + r \mid
 r\kstar \mid
 r\{m,n\} 
$,
where $\sigma \subseteq \Sigma$ is a predicate over the alphabet and $m,n$ are natural numbers. The expression $r\{m,n\}$ describes the repetition of $r$ from $m$ to $n$ times, so we require that $0 \leq m \leq n$. We write $r\{n\}$ for $r\{n, n\}$. The concatenation symbol is sometimes omitted, i.e., we write $r_1 r_2$ instead of $r_1 \cdot r_2$.
The \emph{interpretation} of a regex $r$ is a language $\sem{r} \subseteq \Sigma \kstar$, which is defined in the standard way.

\textbf{\em Notation for predicates}:
A predicate over the alphabet is sometimes referred to as a \emph{character class}. The predicate $\Sigma$ contains all symbols in the alphabet.
When we use a symbol $a \in \Sigma$ in a regex, it should be understood as the singleton predicate $\{ a \} \subseteq \Sigma$. We will also use the notation $[a_1 \ldots a_n]$ in a regex to represent the predicate $\{ a_1, \ldots, a_n \} \subseteq \Sigma$. We write $[\caret a_1 \ldots a_n]$ for the predicate $\Sigma \setminus \{ a_1, \ldots, a_n \}$ that contains all symbols aside from $a_1, \ldots, a_n$. For a predicate $\sigma \subseteq \Sigma$, we write $\bar\sigma = \Sigma \setminus \sigma$ to denote its \emph{complement}.


We fix an infinite set $\CReg$ of counter registers or, simply, \emph{counters}. We typically write $x, y, z, \ldots$ to denote counter registers. For a subset $V \subseteq \CReg$ of counters, we say that a function $\beta: V \to \N$, which assigns a value to each counter in $V$, is a \emph{$V$-valuation}.

\begin{definition}
\label{def:NCA}
Let $\Sigma$ be a finite alphabet. A \emph{nondeterministic counter automaton (NCA)} with input alphabet $\Sigma$ is a tuple $\aut A = (Q,R,\Delta,I,F)$, where
\begin{itemize}[$-$]
\item
$Q$ is a finite set of \emph{states},
\item
$R: Q \to \mathcal{P}(\CReg)$ is a function that maps each state to a finite set of counters,
\item
$\Delta$ is the \emph{transition relation}, which contains finitely many transitions of the form $(p, \sigma, \phi, q, \theta)$, where $p$ is the source state, $\sigma \subseteq \Sigma$ is a predicate over the alphabet, $\phi \subseteq (R(p) \to \N)$ is a predicate over $R(p)$-valuations, $q$ is the destination state, and $\theta: (R(p) \to \N) \to (R(q) \to \N)$,
\item
$I$ is the \emph{initialization function}, a partial function defined on the subset $\dom(I) \subseteq Q$ of \emph{initial states} that specifies an \emph{initial valuation} $I(q): R(q) \to \N$ for each initial state $q$, and
\item
$F$ is the \emph{finalization function}, a partial function defined on the subset $\dom(F) \subseteq Q$ of \emph{final states} that specifies a predicate $F(q) \subseteq R(q) \to \N$ for each final state $q$.
\end{itemize}
We say that a state $q \in Q$ is \emph{pure} if $R(q) = \emptyset$, that is, it has no counter associated with it.
\end{definition}

We remark that the states in an NCA of Definition~\ref{def:NCA} do not necessarily have the same counters. In fact, some states may not have any counter at all. In a transition $(p, \sigma, \phi, q, \theta)$, we will call the predicate $\phi$ a \emph{guard} because it may restrict a transition based on the values of the counters, and we will call the function $\theta$ an \emph{action}, because it describes how to assign counter values in the destination state given the counter values in the source state.

We convert regexes (with counting) to NCAs that recognize the same language using a variant of the Glushkov construction \cite{Glushkov1961Abstract, GeladeGM09}. 
In contrast to Thompson's construction \cite{Thompson1968}, Glushkov's construction results in $\eps$-free automata that are also \emph{homogeneous}, i.e., all incoming transitions of a state are labeled with the same predicate over the alphabet. 
We present below several examples of NCAs.

\begin{figure*}
\centering
\begin{tikzpicture}[->, >=to, auto, node distance=1cm, semithick, scale=0.9, transform shape]
\small
\node (Pre) {};
\node[draw, rounded rectangle, right of=Pre] (A) {$q_1$};
\node[draw, rounded rectangle, right of=A, node distance=1.5cm] (B) {$q_2$};
\node[draw, rounded rectangle, right of=B, node distance=3cm] (C) {$q_3: x$};
\node[draw, rounded rectangle, right of=C, node distance=3cm] (D) {$q_4: x,y$};
\node[draw, rounded rectangle, right of=D, node distance=2.5cm] (E) {$q_5: x,y$};
\node[draw, rounded rectangle, right of=E, node distance=3.5cm] (F) {$q_6: x$};
\node[draw, rounded rectangle, right of=F, node distance=2.5cm, accepting] (G) {$q_7$};
\node[right of=G] (Post) {};
\path (Pre) edge (A);
\path (A) edge[loop below] node[right, xshift=0.5ex] {$\Sigma$} (A);
\path (A) edge node {$\sigma_1$} (B);
\path (B) edge node {$\sigma_2 \Slash x \coloneqq 1$} (C);
\path (C) edge node {$\sigma_3 \Slash y \coloneqq 1$} (D);
\path (D) edge[bend left=10] node {$\sigma_4$} (E);
\path (E) edge[bend left=10] node {$\sigma_3, y < n \Slash y\pp$} (D);
\path (E) edge node {$\sigma_5, m \leq y \leq n$} (F);
\path (F) edge[bend left=17] node[pos=0.15, yshift=1ex] {$\sigma_2, x<k \Slash x\pp$} (C);
\path (F) edge node {$\sigma_6, x = k$} (G);
\path (G) edge (Post);
\end{tikzpicture}
\caption{NCA with two counters ($x$ and $y$) for the regex $\Sigma\kstar \sigma_1 (\sigma_2 (\sigma_3 \sigma_4)\{m,n\} \sigma_5)\{k\} \sigma_6$ with $1 \leq m \leq n$ and $k \geq 1$.}
\label{fig:NCA_two_counters}
\end{figure*}
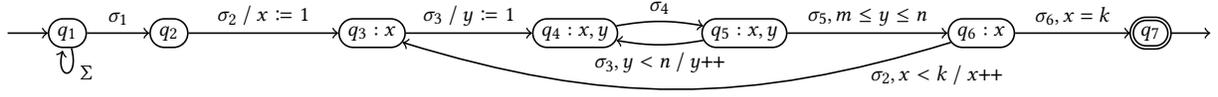

\begin{example}
\label{ex:NCAs}
Consider the regex $r_1 = \Sigma\kstar \sigma_1 \sigma_2 \{n\}$ with $n \geq 1$, where $\sigma_1, \sigma_2$ are predicates over the alphabet\footnote{In order to make the example more concrete, suppose that $\sigma_1 = [a b]$ and $\sigma_2 = [\caret a]$. So, the regular expression $r_1$ is the same as $\mathtt{.\kstar[ab][\caret a]\{n\}}$ using POSIX notation \cite{PosixSyntax}. Note that $\Sigma\kstar$ is the same as $\mathtt{.}\kstar$ in POSIX notation.}. The following automaton recognizes the language of $r_1$:
\[
\begin{tikzpicture}[->, >=to, auto, node distance=1.75cm, semithick, scale=0.9, transform shape]
\small
\node (Pre) {};
\node[draw, rounded rectangle, right of=Pre, node distance=1cm] (A) {$q_1$};
\node[draw, rounded rectangle, right of=A] (B) {$q_2$};
\node[draw, rounded rectangle, right of=B, node distance=3cm, accepting] (C) {$q_3: x$};
\node[right of=C] (Post) {};
\path (Pre) edge (A);
\path (A) edge[loop below] node[right, xshift=0.5ex, yshift=1ex] {$\Sigma$} (A);
\path (A) edge node {$\sigma_1$} (B);
\path (B) edge node {$\sigma_2 \Slash x \coloneqq 1$} (C);
\path (C) edge[loop below] node[right, xshift=0.5ex, yshift=1ex] {$\sigma_2, x<n \Slash x\pp$} (C);
\path (C) edge node {$x = n$} (Post);
\end{tikzpicture}
\]
The automaton above has three states: $q_1$, $q_2$, and $q_3$. We write $q_3: x$ to indicate that $R(q_3) = \{ x \}$. Notice that $q_1$ has no annotation with counters, which means that $R(q_1) = \emptyset$ (i.e., $q_1$ is pure). We annotate each edge $p \to q$ with an expression of the form $\sigma, \phi \Slash \theta$, where $\sigma$ is a predicate over $\Sigma$, $\phi$ is a guard over the counters of $p$, and $\theta$ is an assignment for the counters of $q$ using the counters of $p$. If the guard $\phi$ is omitted, then it is always true. The action $\theta$ is omitted only when $R(q) \subseteq R(p)$, and the omission indicates that the counters $R(q)$ retain the values from the previous state. We can also indicate this explicitly by writing ``$x \coloneqq x$''.
We write ``$x = n$'' for the guard that checks whether the value of counter $x$ is equal to $n$, and we write ``$x \coloneqq n$'' to denote the assignment (action) of the value $n$ to the counter $x$.
We use double circle notation to indicate that a state is final (see state $q_3$ above). An arrow emanating from a final state $q$ is annotated with the predicate $F(q)$ over counter valuations (recall that $F$ is the finalization function).

The regex $r_2 = \Sigma\kstar \sigma_1 (\sigma_2 \sigma_3)\{m,n\} \sigma_4$ with $1 \leq m \leq n$ is recognized by the following automaton:
\[
\begin{tikzpicture}[->, >=to, auto, node distance=0.75cm, semithick, scale=0.9, transform shape]
\footnotesize
\node (Pre) {};
\node[draw, rounded rectangle, right of=Pre] (A) {$q_1$};
\node[draw, rounded rectangle, right of=A, node distance=1.1cm] (B) {$q_2$};
\node[draw, rounded rectangle, right of=B, node distance=2cm] (C) {$q_3: x$};
\node[draw, rounded rectangle, right of=C, node distance=1.75cm] (D) {$q_4: x$};
\node[draw, rounded rectangle, right of=D, node distance=2.75cm, accepting] (E) {$q_5$};
\node[right of=E] (Post) {}; %
\path (Pre) edge (A);
\path (A) edge[loop below] node[right, xshift=0.5ex, yshift=1ex] {$\Sigma$} (A);
\path (A) edge node {$\sigma_1$} (B);
\path (B) edge node {$\sigma_2 \Slash x \coloneqq 1$} (C);
\path (C) edge[bend right=17] node[swap] {$\sigma_3$} (D);
\path (D) edge[bend right=17] node[swap] {$\sigma_2, x < n \Slash x\pp$} (C);
\path (D) edge node {$\sigma_4, m \leq x \leq n$} (E);
\path (E) edge (Post);
\end{tikzpicture}
\]
The regex $r_3 = \sigma_1\{m\} \Sigma\kstar \sigma_2\{n\}$ with $m, n \geq 1$ is recognized by the following automaton:
\[
\begin{tikzpicture}[->, >=to, auto, node distance=1.5cm, semithick, scale=0.9, transform shape]
\footnotesize
\node (Pre) {};
\node[draw, rounded rectangle, right of=Pre, node distance=1cm] (A) {$q_1$};
\node[draw, rounded rectangle, right of=A, node distance=2cm] (B) {$q_2$};
\node[draw, rounded rectangle, right of=B, node distance=2cm] (C) {$q_3: x$};
\node[draw, rounded rectangle, right of=C, node distance=2.5cm, accepting] (D) {$q_4: x$};
\node[right of=D] (Post) {};
\path (Pre) edge (A);
\path (A) edge node {$\sigma_1 \Slash x \coloneqq 1$} (B);
\path (B) edge[loop below] node {$\sigma_1, x < m \Slash x\pp$} (B);
\path (B) edge node {$\Sigma, x = m$} (C);
\path (B) edge[bend right] node[below] {$\sigma_2, x = m \Slash x \coloneqq 1$} (D);
\path (C) edge[loop below] node[right, yshift=1ex] {$\Sigma$} (C);
\path (C) edge node {$\sigma_2 \Slash x \coloneqq 1$} (D);
\path (D) edge[loop below] node {$\sigma_2, x<n \Slash x\pp$} (D);
\path (D) edge node {$x = n$} (Post);
\end{tikzpicture}
\]
All automata so far use one counter. For the regex $r_4 = \Sigma\kstar \sigma_1 (\sigma_2 (\sigma_3 \sigma_4)\{m,n\} \sigma_5)\{k\} \sigma_6$ with $1 \leq m \leq n$ and $k \geq 1$ we need two counters. See Fig.~\ref{fig:NCA_two_counters}.
\end{example}

\paragraph{Nondeterministic semantics}

Let $\aut A$ be an NCA. A \emph{token} for $\aut A$ is a pair $(q,\beta)$, where $q$ is a state and $\beta: R(q) \to \N$ is a counter valuation for $q$. The set of all tokens for $\aut A$ is denoted by $\Tokens(\aut A)$. For a letter $a \in \Sigma$, we define the \emph{token transition relation} $\to^a$ on $\Tokens(\aut A)$ as follows: $(p,\beta) \to^a (q,\gamma)$ if there is a transition $(p,\sigma,\phi,q,\theta) \in \Delta$ with $a \in \sigma$ such that $\beta \in \phi$ and $\gamma = \theta(\beta)$.
A token $(q,\beta)$ is \emph{initial} if the state $q$ is initial. A token $(q,\beta)$ is \emph{final} if the state $q$ is final and $\beta \in F(q)$. A \emph{run} of $\aut A$ on a string $a_1 a_2 \ldots a_n \in \Sigma\kstar$ is a sequence
\[
  (q_0,\beta_0) \xlongrightarrow{a_1}
  (q_1,\beta_1) \xlongrightarrow{a_2}
  (q_2,\beta_2) \xlongrightarrow{a_3}
  \cdots \xlongrightarrow{a_n}
  (q_n,\beta_n),
\]
where each $(q_i,\beta_i)$ is a token, $q_0$ is an initial state and $\beta_0 = I(q_0)$, and $(q_{i-1},\beta_{i-1}) \to^a (q_i,\beta_i)$ for every $i = 1,\ldots,n$. A run is \emph{accepting} if it ends with a final token. The NCA $\aut A$ \emph{accepts} a string if there is an accepting run on it. We write $\sem{\aut A} \subseteq \Sigma\kstar$ for the set of strings that $\aut A$ accepts.

Notice that, for a NCA $\aut A$, the set of tokens $\Tokens(\aut A)$ together with the transition relations $\to^a$ forms a labeled transition system. The family of transition relations $(\to^a)_{a \in \Sigma}$ can be represented as a ternary relation ${\to} \subseteq \Tokens(\aut A) \times \Sigma \times \Tokens(\aut A)$.

\textbf{\em Notation for tokens}: For a pure state $q$ (i.e., a state with no counter, see Definition~\ref{def:NCA}), there is only one valuation, denoted $0_\N: \emptyset \to \N$, which carries no information. So, we will often abuse notation and simply write $q$ for the token $(q,0_\N)$. Similarly, for a state $q$ with one counter, i.e., $R(q) = \{ x \}$ for some $x \in \CReg$, a valuation $\beta$ (of type $\{ x \} \to \N$) for $q$ specifies only one value $c = \beta(x)$ for the unique variable $x$ for $q$. For this reason, we will sometimes write $(q,c)$ for a token for the state $q$.

\paragraph{Semantics using configurations}

Let $\aut A$ be an NCA. A \emph{configuration} for $\aut A$ is a set of tokens for $\aut A$. We write $\Cfg(\aut A)$ for the set of all configurations for $\aut A$. Define the configuration transition function $\delta: \Cfg(\aut A) \times \Sigma \to \Cfg(\aut A)$ as follows:
\[
  \delta(S,a) =
  \{ (q,\gamma) \mid
     (p,\beta) \to^a (q,\gamma)
     \text{ for some $(p,\beta) \in S$}
  \}.
\]
We extend the transition function to $\delta: \Cfg(\aut A) \times \Sigma\kstar \to \Cfg(\aut A)$ by $\delta(S,\eps) = S$ and $\delta(S,xa) = \delta(\delta(S,x),a)$ for every $x \in \Sigma\kstar$ and $a \in \Sigma$. Let $S_0$ be the set of all initial tokens, which we call the \emph{initial configuration}, and define $[\aut A]: \Sigma\kstar \to \Cfg(\aut A)$ by $[\aut A](x) = \delta(S_0,x)$.
This semantics coincides with $\sem{\aut A}$ in the following sense: for every $x \in \Sigma\kstar$, $x \in \sem{\aut A}$ iff $[\aut A](x)$ contains some final token.

\paragraph{Bounded counters}

Let $\aut A$ be a NCA, and $n \in \N$ be a constant. We say that a token $(q,\beta)$ is \emph{$n$-bounded} if $\beta(x) \leq n$ for every counter $x \in R(q)$. We also say that $\aut A$ (resp., a state $q$) is \emph{$n$-bounded} if every token (resp., token on state $q$) reachable from some initial token is $n$-bounded. Finally, the NCA $\aut A$ is said to \emph{have bounded counters} if there exists some constant $n \in \N$ such that $\aut A$ is $n$-bounded. Notice that NCAs with bounded counters have the same expressiveness as finite-state automata (i.e., DFAs and NFAs), but they are potentially more succinct \cite{StockmeyerM1973}.

As mentioned earlier, the automata that we consider here are obtained from regexes with counting using the Glushkov construction. A consequence of this is that every counter incrementation action of the form $x\pp$ is guarded by some test $x < n$ because it corresponds to a subexpression of the form $r\{m,n\}$.
It follows that an automaton thus constructed has bounded counters. Moreover, for every control state and every counter, we can read an upper bound from the automaton.
For example, in Figure~\ref{fig:NCA_two_counters}, the counter $x$ is bounded above by $k$ (at all states $q_3, q_4, q_5, q_6$) because $(q_6, \sigma_2, \text{``$x<k$''}, q_3, \text{``$x\pp$''})$ is the only transition that increments $x$. Similarly, the counter $y$ is bounded above by $n$ (at all states $q_4, q_5$) because $(q_5, \sigma_3, \text{``$y<n$''}, q_4, \text{``$y\pp$''})$ is the only transition that increments $y$.

\section{Static Analysis}
\label{sec:static_analysis}

In this section, we will see how to perform a static analysis over regexes to check counter-(un)ambiguity.
It is well-known that the presence of counting in regexes can cause a blow-up in the amount of memory that is needed for the streaming membership problem (checking if a string matches the regex in a single left-to-right pass) \cite{MeyerF1971} (more results about regexes with counting are given in \cite{MeyerS1972, StockmeyerM1973}).
There are, however, many cases that do not exhibit this worst-case behavior. In this section, we will describe a static analysis for identifying occurrences of bounded repetition $\{m,n\}$ which can be implemented using memory that is logarithmic in $n$. This enables a significant reduction in the memory that needs to be reserved for the membership problem.
In order to identify the easier cases of bounded repetition, we use the concept of counter-unambiguity, which informally says that the nondeterminism of the automaton is constrained. We then develop two algorithms for deciding counter-unambiguity (one exact and one approximate), and we provide experimental results showing that they are effective in practice.

Let $\aut A = (Q,R,\Delta,I,F)$ be an NCA. For a state $q \in Q$ and a subset $T \subseteq \Tokens(\aut A)$ of tokens for the automaton, define $T|_q = T \cap (\{q\} \times (R(q) \to \N))$. That is, $T|_q$ contains exactly those tokens of $T$ whose first component is the state $q$. The operational intuition is that $[\aut A](x)|_q$ is the set of tokens that we get at state $q$ when we execute the automaton $\aut A$ on input $x$. When it is possible to have more than two tokens on the same state $q$ after consuming an input string, we say that the state exhibits \emph{counter-ambiguity}. We will now define this concept and other related notions more formally.

\begin{definition}[\bfseries Degree of Counter-Ambiguity]
\label{def:degree_ambiguity}
Let $\aut A$ be an NCA with bounded counters and $q$ be a state. The \emph{(counter-ambiguity) degree} (which we will also call \emph{degree of counter-ambiguity}) of $q$ is defined as
\[
  \degree(q) =
  \tsup_{x \in \Sigma\kstar} \bigl(
    \text{size of $[\aut A](x)|_q$}
  \bigr).
\]
We say that $q$ is \emph{counter-unambiguous} when $\degree(q) \leq 1$, and that $q$ is \emph{counter-ambiguous} when $\degree(q) \geq 2$.
\end{definition}

Notice that if the degree of a state $q$ is equal to zero, then the state $q$ is unreachable.

\subsection{Deciding Counter-Ambiguity}
\label{sec:exact_algo}

According to Definition~\ref{def:degree_ambiguity}, the degree of counter-ambiguity of a state $q$ is the maximum number of different tokens that can end up at $q$ during a computation. A state $q$ is counter-ambiguous iff there is a string $a_1 a_2 \ldots a_n \in \Sigma\kstar$ and two different runs on $a_1 a_2 \ldots a_n$
\begin{gather*}
  (q_0,\beta_0) \xlongrightarrow{a_1}
  (q_1,\beta_1) \xlongrightarrow{a_2}
  (q_2,\beta_2) \xlongrightarrow{a_3}
  \cdots \xlongrightarrow{a_n}
  (q_n,\beta_n)
  \\
  (q'_0,\beta'_0) \xlongrightarrow{a_1}
  (q'_1,\beta'_1) \xlongrightarrow{a_2}
  (q'_2,\beta'_2) \xlongrightarrow{a_3}
  \cdots \xlongrightarrow{a_n}
  (q'_n,\beta'_n),
\end{gather*}
such that $q = q_n = q'_n$ and $\beta_n \neq \beta'_n$.

Let $G$ be the labeled transition system of tokens $\Tokens(\aut A)$ and token transitions of the form $t_1 \to^a t_2$, where $t_1, t_2$ are tokens and $a \in \Sigma$. Define $G^2 = G \times G$ to be the \emph{product} transition system with states $\Tokens(\aut A) \times \Tokens(\aut A)$, which contains a transition $\lt t_1, t_2 \rt \to^a \lt t'_1, t'_2 \rt$ iff $t_1 \to^a t'_1$ and $t_2 \to^a t'_2$. A pair $\lt t_1, t_2 \rt$ is initial if both $t_1$ and $t_2$ are initial tokens. According to the characterization of the previous paragraph, a state $q$ of $\aut A$ is counter-ambiguous iff there exists a path in $G^2$ that ends with some pair $\lt (q,\beta),(q,\beta') \rt$, where $\beta \neq \beta'$. This idea can be extended to characterize the situation where a state $q$ has degree at least $d \geq 2$: there exists a path in the $d$-fold Cartesian product $G^d$ that ends with some tuple $\lt (q,\beta_1), \ldots, (q,\beta_d) \rt$, where $\beta_1, \ldots, \beta_d$ are all distinct.

\textbf{\em Algorithm for Counter-Ambiguity}:
When the product transition system $G^d$ is finite, we can decide whether the counter-ambiguity degree of a state is $\geq d$ with a straightforward reachability algorithm.
For deciding counter-ambiguity, we check whether the degree is $\geq 2$, and therefore it suffices to consider only $G^2$.
Notice that for the bounded counter automata that we consider, $G^d$ is always finite. We just need to exercise care to avoid a blowup in the number of transitions. In our automata, the transitions are annotated with predicates over the alphabet, not symbols of the alphabet. This is a succinct way to represent transitions, and we want to maintain such a representation in the graphs $G^d$ (assuming that we also use such a representation for $G$). This can be done by considering the intersections of predicates and checking whether they are empty. More specifically, for every pair of transitions $t_1 \to^{\sigma_1} t'_1$ and $t_2 \to^{\sigma_2} t'_2$, we add the transition $\lt t_1, t_2 \rt \to^{\sigma_1 \cap \sigma_2} \lt t'_1,t'_2 \rt$ in $G^2$ when $\sigma_1 \cap \sigma_2$ is nonempty.

\begin{example}
\definecolor{lgray}{gray}{0.9}
We will discuss here how to check counter-(un)ambiguity for the regex $\Sigma\kstar \sigma \{2\}$.
First, we construct the NCA for this regex, which is seen below:
\[
\begin{tikzpicture}[->, >=to, auto, node distance=1.5cm, semithick, scale=0.9, transform shape]
\footnotesize
\node (Pre) {};
\node[draw, rounded rectangle, right of=Pre, node distance=1cm] (A) {$q_1$};
\node[draw, rounded rectangle, right of=A, node distance=2cm, accepting] (B) {$q_2: x$};
\node[right of=B] (Post) {};
\path (Pre) edge (A);
\path (A) edge[loop below] node[right, xshift=0.5ex, yshift=0.75ex] {$\Sigma$} (A);
\path (A) edge node {$\sigma \Slash x \coloneqq 1$} (B);
\path (B) edge[loop below] node[right, xshift=0.5ex, yshift=0.75ex] {$\sigma, x<2 \Slash x\pp$} (B);
\path (B) edge node {$x = 2$} (Post);
\end{tikzpicture}
\]
Based on this NCA, we construct the transition system of tokens seen below, where $q_1$ is abbreviation for the token $(q_1, 0_\N)$ ($q_1$ is a pure state), and $(q_2, n)$ is abbreviation for the token $(q_2, x \mapsto n)$ (the counter assignment maps $x$ to $n$).
\[
\begin{tikzpicture}[->, >=to, auto, node distance=1.5cm, semithick, scale=0.9, transform shape]
\footnotesize
\node (Pre) {};
\node[draw, rounded rectangle, right of=Pre, node distance=1cm] (A) {$q_1$};
\node[draw, rounded rectangle, right of=A, node distance=2cm] (B) {$(q_2, 1)$};
\node[draw, rounded rectangle, right of=B, node distance=2cm, accepting] (C) {$(q_2, 2)$};
\path (Pre) edge (A);
\path (A) edge[loop below] node[right, xshift=0.5ex, yshift=0.75ex] {$\Sigma$} (A);
\path (A) edge node {$\sigma$} (B);
\path (B) edge node {$\sigma$} (C);
\end{tikzpicture}
\]
The token transition system is essentially an NFA, where the final state (token) is indicated with a double circle.

To check the counter-ambiguity of a state $q$, we build the product transition system and check whether there exists a path that ends in a pair of tokens $\lt (q,\beta),(q,\beta') \rt$ with $\beta \neq \beta'$.
The figure below shows the product transition system where the presence of the pair $\lt (q_2, 1), (q_2, 2) \rt$ or $\lt (q_2, 2), (q_2, 1) \rt$ (colored in gray) witnesses the counter-ambiguity.
\[
\begin{tikzpicture}[->, >=to, auto, node distance=1.5cm, semithick, scale=0.9, transform shape]
\footnotesize
\node (Pre) {};
\node[draw, rounded rectangle, right of=Pre, node distance=1.5cm] (A) {$\lt q_1,q_1 \rt$};
\node[draw, rounded rectangle, right of=A, node distance=2.5cm] (B) {$\lt q_1, (q_2, 1) \rt$};
\node[draw, rounded rectangle, right of=B, node distance=3.5cm] (C) {$\lt q_1, (q_2, 2) \rt$};
\node[draw, rounded rectangle, below of=A, node distance=1cm] (D) {$\lt (q_2, 1), q_1 \rt$};
\node[draw, rounded rectangle, below of=B, node distance=1cm] (E) {$\lt (q_2, 1), (q_2, 1) \rt$};
\node[draw, rounded rectangle, below of=C, node distance=1cm, fill=lgray] (F) {$\lt (q_2, 1), (q_2, 2) \rt$};
\node[draw, rounded rectangle, below of=D, node distance=1cm] (G) {$\lt (q_2, 2), q_1 \rt$};
\node[draw, rounded rectangle, below of=E, node distance=1cm, fill=lgray] (H) {$\lt (q_2, 2), (q_2, 1) \rt$};
\node[draw, rounded rectangle, below of=F, node distance=1cm, accepting] (I) {$\lt (q_2, 2), (q_2, 2) \rt$};
\path (Pre) edge (A);
\path (A) edge[out=200,in=240,looseness=5] node[left, xshift=-0.5ex, yshift=0.75ex] {$\Sigma$} (A);
\path (A) edge node {$\sigma$} (B);
\path (A) edge node {$\sigma$} (E);
\path (A) edge node {$\sigma$} (D);
\path (B) edge node {$\sigma$} (C);
\path (B) edge node {$\sigma$} (F);
\path (D) edge node {$\sigma$} (G);
\path (D) edge node {$\sigma$} (H);
\path (E) edge node {$\sigma$} (I);
\end{tikzpicture}
\]
Because of symmetry, some states and transitions can be safely removed from the product automaton. Notice, for example, that we do not need to explore both $\lt (q_2, 1), q_1 \rt$ and $\lt q_1, (q_2, 1) \rt$. Therefore, in future examples, we will omit part of the product automaton.
\end{example}

The exact analysis halts as soon as it finds a token pair that witnesses counter-ambiguity. So, not all pairs are generated during the static analysis, unless the regex is counter-unambiguous. 

Consider a regex $r$ that contains an occurrence of counting of the form $(abcd)\{m,n\}$. When the repetition bounds are sufficiently large, in the automaton $\aut A$ for $r$, the four states that correspond to $abcd$ are either all counter-unambiguous or they are all counter-ambiguous. For this reason, the notion of counter-(un)ambiguity can be defined with respect to instances of bounded repetition in regexes.
%
%
We will also call a regex counter-ambiguous if it contains at least one occurrence of bounded repetition that is counter-ambiguous (equivalently, the NCA for the expression has at least one counter-ambiguous state).

\begin{lemma}[\bfseries Checking Counter-Ambiguity Is Hard]
\label{lemma:hardness}
\normalfont
Let \textsc{CAmbiguity} be the following problem: Given a regex $r$ as input, is $r$ counter-ambiguous? \textsc{CAmbiguity} is NP-hard.
\end{lemma}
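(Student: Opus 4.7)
The plan is to reduce from a classical NP-hard numerical problem such as \textsc{SubsetSum}, since the arithmetic of bounded counters aligns naturally with subset-sum arithmetic. Given an instance $(a_1, \ldots, a_n, T)$ with all integers encoded in binary, I would construct in polynomial time a regex $r$ over a small alphabet. The regex uses bounded repetition operators like $\sigma\{a_i\}$ to carry the numbers $a_i$ succinctly (binary encoding is essential to keep $r$ polynomial while the counter values range up to exponential magnitudes), and nondeterministic union of the form $\varepsilon + \sigma\{a_i\}$ to realize, for each $i$, the binary choice of whether to include $a_i$ in a subset.

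The regex would be shaped so that every subset $S \subseteq \{1, \ldots, n\}$ with $\sum_{i \in S} a_i = T$ is witnessed by a run ending at a common control state $q$, and so that the counter valuation at $q$ records enough information about $S$ to distinguish different subsets from one another. Under this encoding, two distinct subsets summing to $T$ yield two distinct tokens at $q$, so counter-ambiguity of $q$ is equivalent to the existence of two distinct subsets summing to $T$. A padding step that appends a zero-weight element to the instance promotes ``at least one subset'' to ``at least two subsets'' without affecting NP-hardness of \textsc{SubsetSum}, which completes the reduction.

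The step I expect to be the main obstacle is engineering the regex so that distinct subsets produce genuinely \emph{distinct} counter valuations at $q$, while no spurious tokens with unrelated valuations sneak into $q$. Because counters in the Glushkov-style NCAs of Section~\ref{sec:prelim} can only be incremented inside the bounded repetition that defines them, the ``subset encoding'' has to piggyback on the repetition structure of the regex; I foresee needing several auxiliary counters, one per element, whose residual values at $q$ jointly spell out the chosen subset. If this gadgetry becomes unwieldy, a natural fall-back is to reduce from 3SAT using the Meyer--Fischer trick: patterns of the form $\Sigma\kstar \sigma \Sigma\{k\}$ can force specific positions of a guessed assignment string to take specific values, and, duplicated to produce two distinct witnesses of a satisfying assignment, such patterns relate satisfiability of the formula directly to counter-ambiguity of the constructed regex.
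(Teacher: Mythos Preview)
Your instinct to reduce from \textsc{SubsetSum} and to encode the inclusion/exclusion choice by $(\eps + a\{n_i\})$ is exactly the paper's starting point. The gap is in how you plan to witness ambiguity. You aim to make two \emph{distinct subsets} summing to $T$ produce two distinct tokens at a common state $q$, with the counter valuation at $q$ ``spelling out'' the chosen subset. In the Glushkov-style NCAs of Section~\ref{sec:prelim} this does not work: a counter exists only at states that lie \emph{inside} the corresponding bounded repetition, and the gadgets $(\eps + a\{n_i\})$ are concatenated sequentially, not nested. Once a run exits the concatenation, all per-element counters have been dropped, and every subset $S$ with $\sum_{i\in S} n_i = T$ reaches the exit state at the same moment (after consuming $a^T$) as the same pure token. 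So there is no counter left to distinguish subsets, and your ``several auxiliary counters at $q$'' cannot be realized. The padding-by-zero trick then has nothing to act on.

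The paper sidesteps this entirely with a much lighter idea: it does not try to distinguish two subsets. Instead it builds a two-branch alternation,
\[
  \bigl((a\{n_1\}+\eps)\cdots(a\{n_m\}+\eps)\,\#\,b\bigr) \;+\; \bigl(a\{T\}\,\#\,bb\bigr),
\]
followed by a tiny final counter $b\{2\}$. The right branch always accepts $a^T\#bb$; the left branch accepts $a^T\#b$ iff some subset sums to $T$. The single extra $b$ creates a one-step offset, so on input $a^T\#bbb$ the state for $b\{2\}$ receives counter value~1 via one branch and~2 via the other---precisely when the subset-sum instance is positive. Thus the ambiguity is between ``the canonical path'' and ``a subset path'', not between two subset paths, and a constant-size counter suffices. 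This is the missing idea in your plan; once you see it, the 3SAT fallback is unnecessary.
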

\begin{proof}
Consider the alphabet $\Sigma = \{ a, b, \# \}$. We will give a polynomial-time reduction from the subset sum problem to \textsc{CAmbiguity}. Let $S = \{ n_1, n_2, \ldots n_m \}$ be a set of natural numbers and $T$ be a natural number. Recall that the subset sum problem asks whether there is a subset $S' \subseteq S$ of numbers whose sum is equal to $T$. Consider the regex
\[
  ( ((a\{n_1\} + \eps) \cdots (a\{n_m\} + \eps) \# b) + (a\{T\} \# bb) )b\{2\}.
\]
We focus on the rightmost occurrence of bounded repetition (i.e., $b\{2\}$). We claim that this occurrence is counter-ambiguous if and only if there is a subset $S' \subseteq S$ whose sum is $T$. Consider the corresponding Glushkov automaton and the state $q$ which leads to the final state at the end that recognizes the $b\{2\}$. A word witnessing a path to $q$ would have to be of the form $a^x \# b^y$ for some natural numbers $x, y$. If $x \neq T$, then the word has no path through the branch $(a\{T\} \# bb)$. So, the only value it can induce on the counter at the end is $(y - 2)$. If $x= T$, and there exists a subset $S'$ of $S$ such that $\sum S' = T$, then $a\{T\} \# bbb$ could either take the path $(a\{T\}\#bb)$ and set the counter to 1, or it could take the other path and set the counter to 2. If $x = T$ and there is no such subset $S'$, then the only path the word can take is through the branch $(a\{T\} \# bb)$ which would set the counter to $(y - 2)$.
\end{proof}

\subsection{Over-Approximate Analysis}
\label{sec:approx_algo}

In \S\ref{sec:exact_algo}, we presented an (exact) algorithm for deciding the counter-(un)ambiguity of regexes and NCAs. The algorithm operates on the transition system of tokens of an NCA, whose size can be exponential in the size of the regex, because of the counter valuations.
For example, the regex $\Sigma\kstar \cdot a \cdot \Sigma\{n\}$ has size $\Theta(\log n)$ (because the repetition bound $n$ is represented succinctly in binary or decimal notation) and the corresponding token transition system has size $\Theta(n)$.
From this it follows that the exact algorithm may need exponential time in the worst case. Unfortunately, this worst-case behavior is not easy to avoid given the NP-hardness of the problem (Lemma~\ref{lemma:hardness}).
For this reason, we propose here a heuristic algorithm that performs an ``over-approximate'' analysis, which can give two outputs: it either declares that a state is counter-unambiguous, or it says that the analysis is inconclusive. In other words, there are cases where the algorithm may suspect that a state is counter-ambiguous, but it cannot conclusively declare it so.

The idea is to over-approximate all occurrences of $\{m,n\}$ (constrained repetition) with $\kstar$ (unconstrained repetition), except for the one that we are analyzing. If we think of this transformation in terms of NCAs, we see that it adds more paths to the token transition graph, because more transitions are now enabled. A consequence of this is that if the over-approximate automaton is counter-unambiguous, then surely the original automaton (which has less paths) is also counter-unambiguous. On the other hand, if the over-approximate automaton is counter-ambiguous, then we cannot infer that the original automaton is counter-ambiguous.

\begin{example}
\label{exp:approx}
\definecolor{lgray}{gray}{0.9}
We show the static analysis for a counter-unambiguous regex $r = \Sigma\kstar(\bar\sigma_1 \sigma_1 \{n\} + \bar\sigma_2 \sigma_2 \{n\})$, where $n$ is a constant. For this regex, the over-approximate analysis is more efficient than the exact analysis. To illustrate this, we first construct the NCA:
\[
\begin{tikzpicture}[->, >=to, auto, node distance=1.5cm, semithick, scale=0.9, transform shape]
\footnotesize
\node (Pre) {};
\node[draw, rounded rectangle, right of=Pre, node distance=1cm] (A) {$q_1$};
\node[draw, rounded rectangle, right of=A, node distance=2cm] (B) {$q_2$};
\node[draw, rounded rectangle, below of=B, node distance=1cm] (C) {$q_3$};
\node[draw, rounded rectangle, right of=B, node distance=2.5cm, accepting] (D) {$q_4: x$};
\node[draw, rounded rectangle, right of=C, node distance=2.5cm, accepting] (E) {$q_5: x$};
\node[right of=D] (Post1) {};
\node[right of=E] (Post2) {};
\path (Pre) edge (A);
\path (A) edge[loop below] node {$\Sigma$} (A);
\path (A) edge node {$\bar \sigma_1$} (B);
\path (A) edge[bend right] node {$\bar \sigma_2$} (C);
\path (B) edge node {$\sigma_1 \Slash x \coloneqq 1$} (D);
\path (D) edge[loop below] node[right, xshift=0.5ex, yshift=1.5ex] {$\sigma_1, x < n \Slash x\pp$} (D);
\path (C) edge node {$\sigma_2 \Slash x \coloneqq 1$} (E);
\path (E) edge[loop below] node[right, xshift=0.5ex, yshift=1.5ex] {$\sigma_2, x< n \Slash x\pp$} (E);
\path (D) edge node {$x = n$} (Post1);
\path (E) edge node {$x = n$} (Post2);
\end{tikzpicture}
\]
The exact analysis constructs the token transition system:
\[
\begin{tikzpicture}[->, >=to, auto, node distance=1.5cm, semithick, scale=0.9, transform shape]
\footnotesize
\node (Pre) {};
\node[draw, rounded rectangle, right of=Pre, node distance=1cm] (A) {$q_1$};
\node[draw, rounded rectangle, right of=A, node distance=1.5cm] (B) {$q_2$};
\node[draw, rounded rectangle, below of=B, node distance=0.8cm] (C) {$q_3$};
\node[draw, rounded rectangle, right of=B, node distance=1.7cm] (D) {$(q_4, 1)$};
\node[draw, rounded rectangle, right of=C, node distance=1.7cm] (E) {$(q_5, 1)$};
\node[right of=D, node distance=1.5cm] (F) {...};
\node[right of=E, node distance=1.5cm] (G) {...};
\node[draw, rounded rectangle, right of=F, node distance=1.7cm, accepting] (H) {$(q_4, n)$};
\node[draw, rounded rectangle, right of=G, node distance=1.7cm, accepting] (I) {$(q_5, n)$};
\path (Pre) edge (A);
\path (A) edge[loop below] node[left, xshift=-0.5ex, yshift=0.5ex] {$\Sigma$} (A);
\path (A) edge node {$\bar \sigma_1$} (B);
\path (A) edge[bend right] node[swap, xshift=0.5ex, yshift=0.5ex] {$\bar \sigma_2$} (C);
\path (B) edge node {$\sigma_1$} (D);
\path (C) edge node {$\sigma_2$} (E);
\path (D) edge node {$\sigma_1$} (F);
\path (E) edge node {$\sigma_2$} (G);
\path (F) edge node {$\sigma_1$} (H);
\path (G) edge node {$\sigma_2$} (I);
\end{tikzpicture}
\]
To determine whether the regex is counter-unambiguous, the exact analysis explores all possible token pairs in the product transition system. In this example, the number of explored pairs is $\Theta(n^2)$. Below is a part of the product transition system, in which all token pairs $\lt (q_5, i), (q_4, j)\rt$ with $1 \leq i < j \leq n$ (colored in gray) will be explored.
\[
\begin{tikzpicture}[->, >=to, auto, node distance=1.8cm, semithick, scale=0.9, transform shape]
\footnotesize
\node (Pre) {};
\node[draw, rounded rectangle, right of=Pre, node distance=1.5cm] (A) {$\lt q_1, q_1 \rt$};
\node[draw, rounded rectangle, right of=A, node distance=2.5cm] (B) {$\lt q_1, q_2 \rt$};
\node[draw, rounded rectangle, below of=A, node distance=1.25cm] (C) {$\lt q_3, (q_4, 1) \rt$};
\node[draw, rounded rectangle, right of=B, node distance=2.5cm] (D) {$\lt q_1, (q_4, 1) \rt$};
\node[draw, rounded rectangle, below of=C, node distance=1.25cm, fill=lgray] (E) {$\lt (q_5, 1), (q_4, 2) \rt$};
\node[draw, rounded rectangle, below of=B, node distance=1.25cm] (F) {$\lt q_3, (q_4, 2) \rt$};
\node[draw, rounded rectangle, right of=F, node distance=3cm, fill=lgray] (G) {$\lt (q_5, 1), (q_4, 3) \rt$};
\node[draw, rounded rectangle, right of=E, node distance=3.5cm, fill=lgray] (H) {$\lt (q_5, 2), (q_4, 3) \rt$};
\node[right of=D] (Post1) {...};
\node[right of=H] (Post2) {...};
\node[right of=G] (Post3) {...};
\path (Pre) edge (A);
\path (A) edge[loop below] node[left, xshift=-0.5ex, yshift=0.5ex] {$\Sigma$} (A);
\path (A) edge node {$\bar \sigma_1$} (B);
\path (B) edge node[pos=0.2] {$\bar \sigma_2 \cap \sigma_1$} (C);
\path (B) edge node {$\sigma_1$} (D);
\path (C) edge node {$\sigma_2 \cap \sigma_1$} (E);
\path (D) edge node[pos=0.2] {$\bar \sigma_2 \cap \sigma_1$} (F);
\path (F) edge node[swap] {$\sigma_2 \cap \sigma_1$} (G);
\path (E) edge node {$\sigma_2 \cap \sigma_1$} (H);
\path (D) edge node {} (Post1);
\path (H) edge node {} (Post2);
\path (G) edge node {} (Post3);
\end{tikzpicture}
\]
We have observed that regexes of the form $r = \Sigma\kstar(\bar \sigma_1 \sigma_1 \{n\} + \bar\sigma_2 \sigma_2 \{n\})$, where $n$ is a large number, can be found in the Snort and Suricata benchmarks. For these regexes, the exact analysis may require a long computation.
Fortunately, the over-approximate analysis is substantially faster. We approximate the regex as $r' = \Sigma\kstar(\bar \sigma_1 \sigma_1 \{n\} + \bar\sigma_2 \sigma_2 \kstar)$ and $r'' = \Sigma\kstar(\bar \sigma_1 \sigma_1\kstar + \bar\sigma_2 \sigma_2 \{n\})$ and check the counter-ambiguity of $r'$ and $r''$ using the exact analysis.
The regex $r$ is determined to be counter-unambiguous if both $r'$ and $r''$ are counter-unambiguous.
Below, we construct the token transition system $G$ for $r'$. Only $\Theta(n)$ token pairs are explored in the product transition system $G^2$.
\[
\begin{tikzpicture}[->, >=to, auto, node distance=1.5cm, semithick, scale=0.9, transform shape]
\footnotesize
\node (Pre) {};
\node[draw, rounded rectangle, right of=Pre, node distance=1cm] (A) {$q_1$};
\node[draw, rounded rectangle, right of=A, node distance=1.3cm] (B) {$q_2$};
\node[draw, rounded rectangle, below of=B, node distance=0.6cm, accepting] (C) {$q_3$};
\node[draw, rounded rectangle, right of=B, node distance=1.7cm] (D) {$(q_4, 1)$};
\node[right of=D, node distance=1.7cm] (F) {...};
\node[draw, rounded rectangle, right of=F, node distance=1.7cm, accepting] (H) {$(q_4, n)$};
%
\path (Pre) edge (A);
\path (A) edge[loop below] node[left, xshift=-0.5ex] {$\Sigma$} (A);
\path (A) edge node {$\bar \sigma_1$} (B);
\path (A) edge[bend right] node[swap, xshift=1ex, yshift=0.5ex] {$\bar \sigma_2$} (C);
\path (C) edge[loop right] node {$\sigma_2$} (C);
\path (B) edge node {$\sigma_1$} (D);may introduce some additional paths in the token transition system
\path (D) edge node {$\sigma_1$} (F);
\path (F) edge node {$\sigma_1$} (H);
%
\end{tikzpicture}
\]
The over-approximate analysis checks the counter-ambiguity of $r', r''$. So, it reduces the complexity from $\Theta(n^2)$ to $\Theta(n)$.
\end{example}

\subsubsection{NCA Execution with Bit Vectors}
\label{sec:nca_with_bit_vectors}

If the static analysis determines that an NCA state $q$ is counter-ambiguous, then this implies that the execution of the automaton may require several memory locations to store tokens of the form $(q,\beta)$. Assuming that $q$ has only one counter register $x$ (i.e., $R(q) = \{x\}$) and that $q$ is $n$-bounded, we know that there are at most $n$ different possible tokens. In order to compactly represent a set of tokens, the idea is to use a bit vector that indicates the presence or the absence of a specific token on $q$. So, a bit vector $v$ encodes a set of tokens on $q$ as follows: $v[i] = 1$ iff the token $(q,i)$ is active. We can also think of a bit vector as a representation for part of the automaton configuration (recall the configuration semantics from \S\ref{sec:prelim}).

It remains to see how the execution of the automaton can be described using these bit vectors to represent the configuration. Example~\ref{ex:NCAs} shows the NCA for the regex $\Sigma\kstar \sigma_1 (\sigma_2 \sigma_3)\{m,n\} \sigma_4$. This NCA is general enough to illustrate the main ways in which we manipulate bit vectors:
\begin{compactenum}[(1)]
\item
Consider a transition $p \to q$, annotated with ``$\sigma \Slash x \coloneqq c$'', where $p$ is pure and $R(q) = \{ x \}$. A token on $p$ is transformed into a bit vector $v$ for $q$ that is everywhere 0 except that $v[c] = 1$.
\item
Let $p \to q$ be a transition, annotated with $\sigma$, where $R(p) = R(q) = \{x\}$. Since the transition does not change the counter valuations, a bit vector $v$ on $p$ is passed along unchanged to $q$.
\item
We will deal now with a transition $p \to q$, annotated with ``$\sigma, x < n \Slash x\pp$'', where $R(p) = R(q) = \{x\}$. Assume further that both $p$ and $q$ are $n$-bounded, which means that each state carries a bit vector of size $n$. This transition corresponds to performing a \emph{shift operation} to the bit vector $v$ of $p$, resulting in a new bit vector $v'$ for $q$. We have: $v'[1] = 0$ and $v'[i + 1] = v[i]$ for ever $i = 2, \ldots, n-1$.
\item
Finally, let us consider a transition $p \to q$, annotated with ``$\sigma, m \leq x \leq n$'', where $R(p) = \{ x \}$ and $q$ is pure. If $v$ is the current bit vector for $p$, then taking this transition produces a token for $q$ if and only if one of $v[m], v[m+1], \ldots, v[n-1], v[n]$ is equal to 1. In other words, we have to compute the disjunction $v[m] \lor \cdots \lor v[n]$.
\end{compactenum}
The above cases involve the main operations that we use for bit vectors: setting the least significant bit (case 1), shifting left by one position (case 3), and computing the disjunction of some of the most significant bits (case 4).

The way bit vectors are used (setting the lowest-order bit, shifting, and reading high-order bits) is similar to how queues and sliding windows are used for runtime verification with metric temporal logic (MTL) \cite{BartocciDDFMNS2018, MamourasW2020, ChattopadhyayM2020, MamourasCW2021TACAS}. We note that MTL involves constructs that specify time durations with intervals of the form $[m,n]$, which are akin to the bounded repetition operators $\{m,n\}$ of regexes. This explains the similarity in the implementation.

\subsection{Implementation and Experiments}
\label{sec:static_analysis_performance}

We have implemented a Java program that statically analyzes regexes to determine if they are counter-(un)ambiguous. We will call this program the \emph{counter-ambiguity checker}. The implementation includes both the exact and the over-approximate analyses.
As the approximate analysis may be unable to verify the counter-ambiguity of some instances, our checker implements a \textbf{\em hybrid analysis}. First, it checks the counter-(un)ambiguity of each instance of bounded repetition in the regex using the over-approximate analysis. If it finds a potentially counter-ambiguous instance, then it halts the over-approximate analysis and uses the exact algorithm to check the regex. Otherwise, it determines that the regex is counter-unambiguous.

The checker not only determines if a regex is counter-ambiguous but also provides a \emph{counter-ambiguity witness}, which is a string over the alphabet. If the NCA is executed on the witness, then at least two tokens with different counter valuations will end up on some state of the NCA.
The checker supports the analysis of counter-ambiguity for each instance of bounded repetition inside a regex. For example, given a regex $\sigma_1\{m\}\Sigma\kstar \sigma_2\{n\}$, it can check the first instance (i.e., $\{m\}$), which is counter-unambiguous, and the second instance (i.e., $\{n\}$), which is counter-ambiguous.

\begin{table}
\caption{Analysis of regexes in the benchmarks.}
\label{table:benchdata}
\footnotesize
\centering
\renewcommand{\arraystretch}{1.1}
\begin{tabular}{|crrrr|} 
\hline
Benchmark & \multicolumn{1}{c}{\# total} & \multicolumn{1}{c}{\# supported} & \multicolumn{1}{c}{\# counting} & \multicolumn{1}{c|}{\# c-ambiguous}
\\ \hline
Protomata & 2338 & 2338 & 1675 & 1675
\\
Snort & 5839 & 5315 & 1934 & 282
\\
Suricata & 4480 & 3728 & 1510 & 246
\\
SpamAssassin & 3786 & 3690 & 459 & 279
\\
ClamAV & 100472 & 100472 & 4823 & 3626
\\ \hline
\end{tabular}
\end{table}

We evaluate the performance of our counter-ambiguity checker using five benchmarks, which contain regexes collected from real applications. These benchmarks are: (1) the \textbf{Snort} \cite{Snort} and (2) \textbf{Suricata} benchmarks \cite{Suricata} that contain patterns for network traffic, (3) the \textbf{Protomata} benchmark that includes 1309 protein motifs from the PROSITE database \cite{Prosite, RoyA2016}, (4) the \textbf{ClamAV} benchmark \cite{ClamAV} that contains patterns that indicate the presence of viruses, and (5) the \textbf{SpamAssassin} benchmark \cite{SpamAssassin} that includes patterns for detecting spam email.

Table~\ref{table:benchdata} shows some statistics for the regexes included in the benchmarks.
In the Snort, Suricata, and SpamAssassin benchmarks, some of the collected regexes may contain backreferences \cite{PosixSyntax}, which is not a regular operator (i.e., it can give rise to non-regular languages).
We filter out regexes with backreferences from the datasets and perform the static analysis on the remaining regexes (which contain the supported regular operators).
Table~\ref{table:benchdata} provides the following information: the total number of regexes for each benchmark, the number of regexes with supported (regular) operators, the number of regexes with at least one occurrence of constrained repetition (counting), and the number of counter-ambiguous regexes.

\begin{figure*}
\centering
\begin{subfigure}{0.5\linewidth}
\centering
\includegraphics[width=\linewidth]{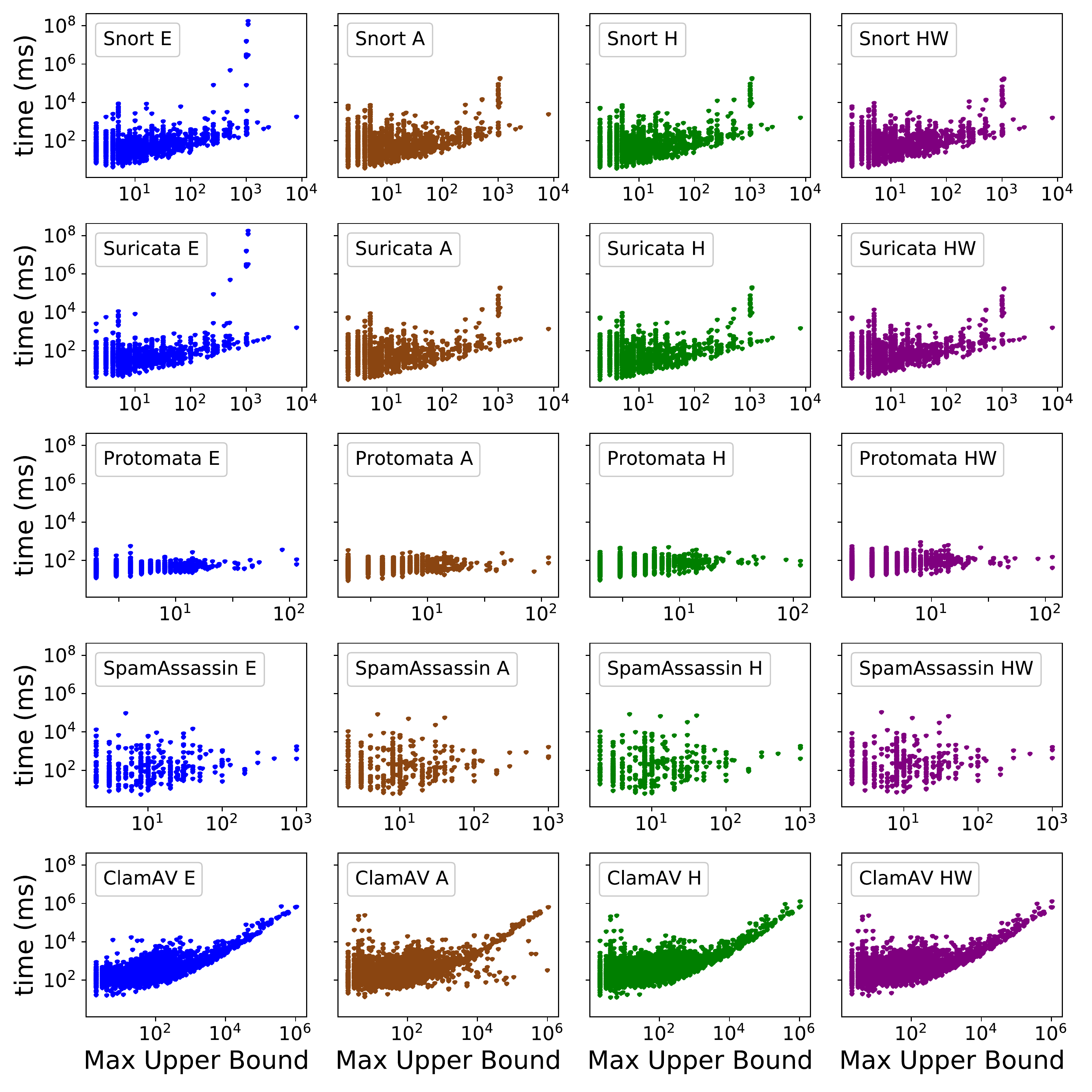}
\caption{running time}
\end{subfigure}%
\begin{subfigure}{0.5\linewidth}
\centering
\includegraphics[width=\linewidth]{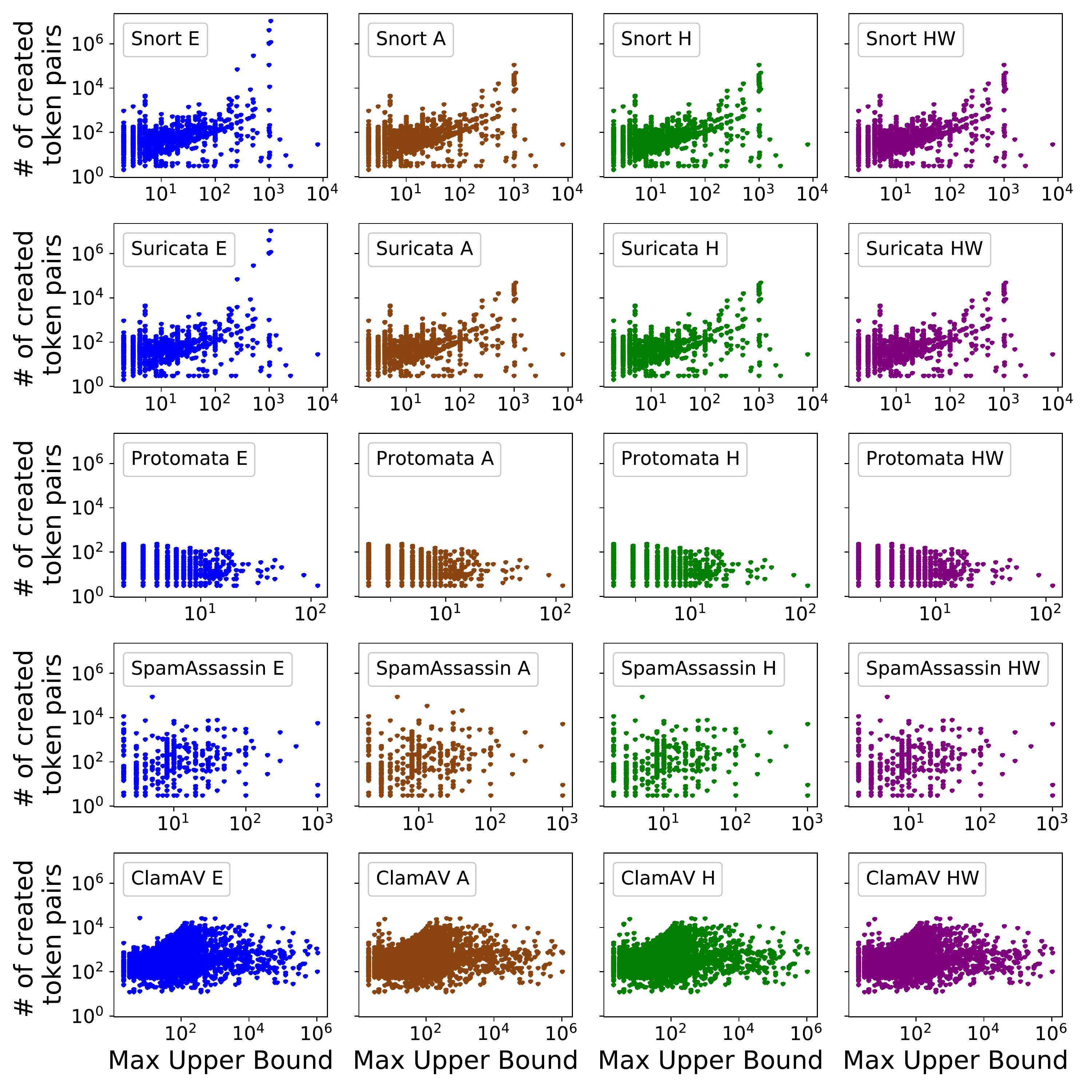}
\caption{\# of created token pairs}
\end{subfigure}
\caption{The (a) running time and the (b) \# of created token pairs of static analysis for regexes with different maximum upper bounds of repetitions. E means exact analysis, A means approximate analysis, H means hybrid analysis, HW means hybrid analysis with reporting inputs that witness the ambiguity. E.g., ``Snort E'' means the exact analysis in Snort benchmark.}
\label{fig:static-duration}
\label{fig:static-space}
\end{figure*}

\begin{figure}
\centering
\includegraphics[width=0.92\linewidth]{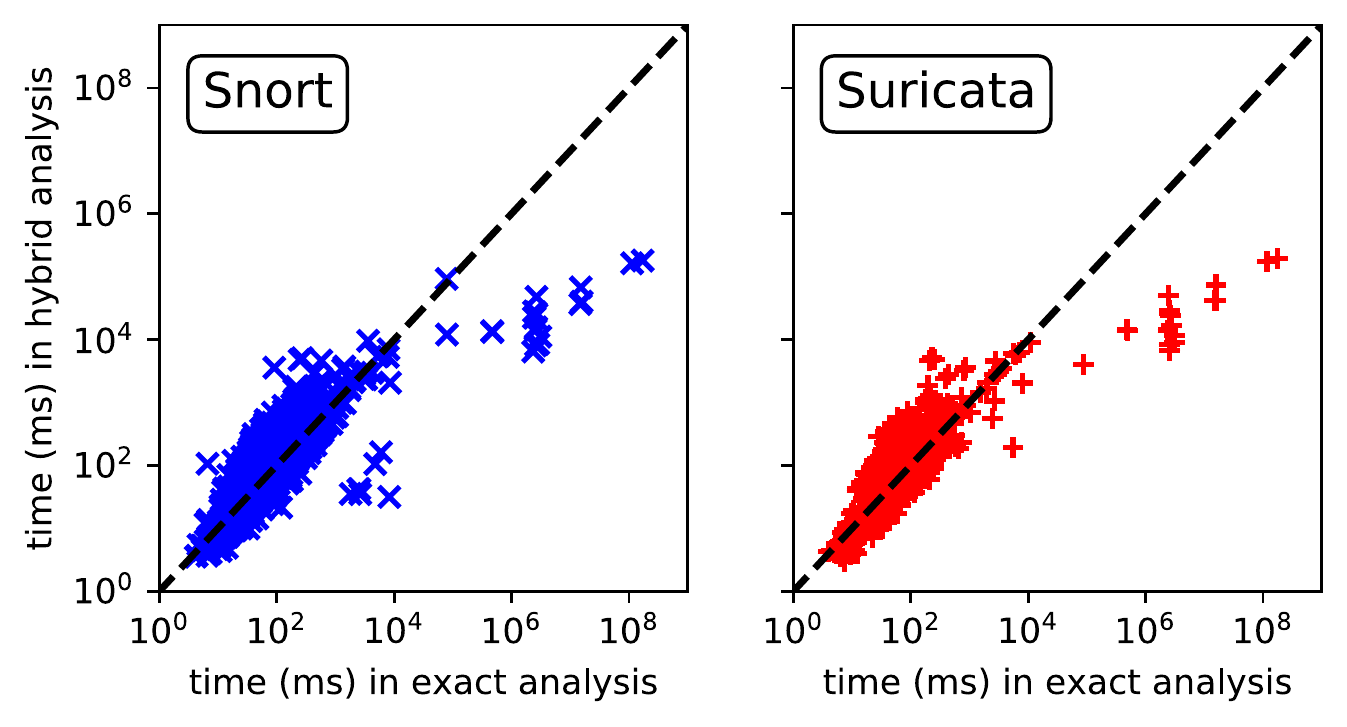}
\caption{Running time (ms) comparison of exact and hybrid analyses on the Snort and Suricata benchmarks.}
\label{fig:outliers-zoom-in}
\end{figure}

\paragraph{Experimental setup}

The experiments were executed in Ubuntu 20.04 on a desktop computer equipped with an Intel Xeon(R) E3-1241 v3 CPU (4 cores) with 16 GB of memory (DDR3 at 1600 MHz). We used OpenJDK 17 and set the maximum heap size to 4 GB. For each regex, we executed 20 trials and selected the mean runtime as the value used the reported results (excluding the first 10 ``warm-up'' trials).

\paragraph{Performance: Running Time.}

We evaluate the performance of the static analysis over regexes that have non-nested instances of constrained repetition.
We report the running time of the static analysis and we consider its dependence on the following ``measure of complexity'' for a regex $r$: the maximum repetition upper bound over all occurrences of $\{m,n\}$ in a regex, which we denote by $\mu(r)$. For example, the regex $r = \sigma_1\{1,5\}\sigma_2\sigma_3\{4\}$ has two occurrences of constrained repetition, and the maximum repetition upper bound is $\mu(r) = \max(5,4) = 5$.
In general, we expect the running time for the analysis of a regex $r$ to depend on $\mu(r)$, since checking counter-ambiguity involves the generation of token pairs whose number increases as $\mu(r)$ increases.

Figure~\ref{fig:static-duration}(a) shows the running time of the static analysis indexed by the measure $\mu$. The results are shown in 20 plots, which are organized in a $5 \times 4$ grid. There are 5 rows, one for each benchmark: Snort, Suricata, Protomata, SpamAssassin, ClamAV. There are 4 columns, one for each variant of the static analyzer: exact, approximate, hybrid, and hybrid with witness reporting.
Each of these 20 plots contains multiple points, one for each regex of the benchmark. For every regex $r$, the corresponding point has horizontal coordinate equal to $\mu(r)$ and vertical coordinate equal to the running time of the analysis (in milliseconds).
We observe that the running time for analyzing a regex $r$ generally increases as $\mu(r)$ increases.

In the Snort and Suricata benchmarks, the checker takes more than 100 seconds to perform the exact analysis for several counter-unambiguous regexes. See the top-right outliers in the plots labeled ``Snort E'' and ``Suricata E'' in Figure~\ref{fig:static-duration}(a).
This information is seen more promimently in Figure~\ref{fig:outliers-zoom-in}, where the exact and hybrid analyses are compared on the Snort and Suricata benchmarks. The points with horizontal coordinate ${>} 10^5$ (msec) are noteworthy. They are substantially below the diagonal, which means that the hybrid analysis offers significant improvement in terms of running time.
Some of these regexes are of the form $\Sigma\kstar (\bar \sigma_1 \sigma_1\{m\} + \bar \sigma_2 \sigma_2\{n\} + \cdots)$, where $m, n, \ldots$ are large numbers.
When performing exact analysis on these regexes, the checker needs to explore a large number of token pairs, which makes the analysis time-consuming.
However, as discussed in Example~\ref{exp:approx}, the over-approximate analysis can greatly reduce the cost of the computation.
We observe that the over-approximate analysis reduces the running time of expensive regexes by over 100 times in both the Snort and Suricata benchmarks.
Moreover, as these regexes are counter-unambiguous, the result of their over-approximate analysis is accurate. This explains why the hybrid analysis also reduces the running time of these challenging regexes.

The fourth column in Figure~\ref{fig:static-duration}(a) shows the performance (in terms of running time) of a variant of the static analyzer that reports a witness (input string) when a regex is counter-ambiguous.
We observe that finding and reporting a counter-ambiguity witness add a very small overhead to the static analysis. This is because recording the witness amounts to simply storing a transition symbol whenever the analysis moves from one token pair to another.

\paragraph{Performance: Memory Footprint.}

The checker analyzes the counter-ambiguity of a regex by exploring token pairs in a product transition system. These token pairs are created on the fly, as the transition system is being explored.
We estimate the memory footprint of the static analysis by measuring the number of token pairs that the checker creates.
Figure~\ref{fig:static-space}(b) shows the results for five benchmarks and four different variants of the static analysis. Similarly to the case of running time, the over-approximate analysis greatly reduces the worst-case cost of analyzing several counter-unambiguous regexes in the Snort and Suricata benchmarks.

\section{Hardware Implementation and Experiments}
\label{sec:implementation}

In this section, we present our hardware design for efficiently executing NCAs. We augment a state-of-the-art in-memory NFA acceleration architecture called CAMA \cite{cama} with counter and bit vector modules. We report hardware simulation results in both microbenchmarks and application benchmarks.

\subsection{Hardware Design}

Existing in-memory automata accelerators adopt a two-phase architecture: a state matching phase that finds the current active states, and a state transition phase that calculates the available states in the next cycle.
AP-style accelerators, such as AP \cite{DlugoschBGLN2014AP}, CA \cite{SubramaniyanAW17}, and eAP\footnote{eAP stands for embedded Automata Processor.} \cite{SRVSS2019eAP}, perform state matching by reading from read-access memories (RAMs) that store bit vector representations of states in memory columns. Each column in the RAM represents one state, which is called a State Transition Element (STE). 
Using 8-bit symbols as an example, each RAM entry is 256-bit and the $i$-th position has value 1 iff the symbol $i$ is associated with the state\footnote{Recall from \S\ref{sec:prelim} that we consider homogeneous automata, which means that all transitions leading to a state $q$ are labeled with the same predicate $\sigma$ over the alphabet. The RAM entry is a representation of the predicate $\sigma$.}. 
Additionally, the connections between states are programmed into a switch network where existing state transitions are realized as physical connections.

Each processing cycle begins in the state matching phase, where an input symbol is encoded as a one-hot representation\footnote{The one-hot representation of an 8-bit symbol $i$ consists of $2^8 = 256$ bits, where the $i$-th bit has value 1 and the others are 0.} and used as the address to read from the state matching memory.
The columns that read out `1's indicate successful matches between the input symbol and the STEs. 
With a logical AND operation between the available states reported from the last cycle and the matched states reported by the memory in the current cycle, matching results of the active states in the current cycle are determined.
Next, in the state transition phase, the current active states pass through the programmed switch network to create the next vector which stores available states for the next cycle.


However, AP-style accelerators severely under-utilize the state matching memories in realistic NFAs across common benchmarks, because this approach is optimal only for the worst case of purely random NFAs. Impala \cite{SadrediniER20} and CAMA \cite{cama} made critical improvements by proposing special encoding schemes to reduce the state matching memory requirements. CAMA further employs specialized content-addressable memories (CAM) to perform state matching with lower energy and memory footprints than all other designs using RAM. As a result, the memory requirement for 256 STEs is reduced from one 256$\times$256 6-transistor SRAM in AP and CA, to two 16$\times$256 6-transistor SRAMs in Impala and approximately one 16$\times$256 8-transistor CAM in CAMA.
Moreover, CAMA optimizes a reduced-crossbar switch network that was first proposed by eAP, which largely reduces the area and energy costs of state transitions.
Compared with prior NFA in-memory architectures, CAMA achieves leading throughput, energy, and area efficiency. 
CAMA's throughput is 2.14GBps, 1.18x better than CA, 9.5x better than FPGA-based Grapefruit \cite{RahimiRS20}, and 2-4 orders better than CPU/GPU solutions. 
CAMA's energy efficiency is 4.91nJ/Byte, over 10x better than most efficient alternatives, i.e. Grapefruit (FPGA) and AP.
This paper uses the latest memory- and energy-efficient CAMA architecture as the baseline and augments it with our proposed counter and bit vector modules.

Figure~\ref{fig:hw-circuit}(a) shows the Glushkov NCA for the counter-unambiguous regex $a(bc)\{1,3\}c$. The Glushkov construction ensures that the NCA is homogeneous (all transitions entering a state are labeled with the same predicate over the alphabet). This property allows us to convert the NCA to a hardware-friendly representation by omitting the initial state and pushing the predicates from the edges to the states, thus transforming NCA states into STEs.
For example, we push the predicate $a$ into state $q_a$ so that in Figure~\ref{fig:hw-circuit}(b) we have a state labeled with the predicate $a$, which becomes an STE that is activated to fire signals only when the input satisfies the predicate $a$.
The original CAMA design, as shown in Figure~\ref{fig:hw-circuit}(c), only supports NCAs by fully unfolding bounded repetitions.
In our augmented CAMA, two types of hardware modules, counters and bit vectors, are added to accelerate the execution of NCAs. As shown in Figure~\ref{fig:hw-circuit}(d), both modules take input from STEs related to counting and produce output signals to the switch network. 
Counters are inserted to support counter-unambiguous repetitions, while bit vectors are reserved for counter-ambiguous repetitions (recall \S\ref{sec:nca_with_bit_vectors}). 
Compared to CAMA, the additional counters and bit vectors retain all necessary processing information while avoiding the cost of unfolding (which results in additional STEs).
In Section~\ref{sec:toMNRL}, we will further explain the design and the input/output ports of the counter and bit vector modules.

Figure~\ref{fig:hw-arch} shows the structure of an augmented CAMA bank. The overall architecture of CAMA is preserved, and the functionalities of existing components remain the same. Each bank consists of an input/output buffer and 16 processing arrays. Each array has a global switch and 8 processing elements (PEs). Each PE contains two 256-STE CAM arrays, two local switches, and 8 counters, and it may contain a bit vector depending on the configuration from users.
Note that the input ports to the counter and bit vector modules are connected to fixed groups of STEs. For example, as shown on the right, port \texttt{pre} is connected to STEs 0 to 7, port \texttt{fst} is connected to STEs 8 to 16, and so on. When enabled, an STE within the group can pass signals to the connected port. 
We use an efficient mapping algorithm to build the connection between ports and STE groups so that we maintain the generality of the design but reduce the complexity of routing.

It is worth mentioning that our proposed counters and bit vectors are not only suitable for the CAMA architecture. Other in-memory automata architectures, like CA, can also be augmented for NCAs with minor hardware design changes.
Specifically, these changes are: (1) counters and bit vectors need to be allowed to connect to elements that represent states, and (2) the routing network needs to be extended to store the transitions from counters and bit vectors.

\begin{figure}
\centering
\begin{subfigure}{0.99\linewidth}
    \centering
    \begin{tikzpicture}[->, >=to, auto, node distance=1cm, semithick, scale=0.9, transform shape]
    \footnotesize
    \node (Pre) {};
    \node[draw, rounded rectangle, right of=Pre, node distance=1cm] (A) {$q_0$};
    \node[draw, rounded rectangle, right of=A, node distance=1cm] (B) {$q_a$};
    \node[draw, rounded rectangle, right of=B, node distance=2cm] (C) {$q_b: x$};
    \node[draw, rounded rectangle, right of=C, node distance=2.25cm] (D) {$q_c: x$};
    \node[draw, rounded rectangle, right of=D, node distance=2cm, accepting] (E) {$q_d$};
    \node[right of=E] (Post1) {};
    \path (Pre) edge (A);
    \path (A) edge node {$a $} (B);
    \path (B) edge node {$b \Slash x \coloneqq 1$} (C);
    \path (C) edge[bend left=10] node {$c$} (D);
    \path (D) edge[bend left=10] node {$b, x<3 \Slash x\pp$} (C);
    \path (D) edge node {$d, 1 \leq x \leq 3$} (E);
    \path (E) edge node {} (Post1);
    \end{tikzpicture}
    \\
    \vspace{-1mm}
    {\small (a)}
\end{subfigure}
\begin{subfigure}{\linewidth}
    \centering
    \includegraphics[width=\linewidth]{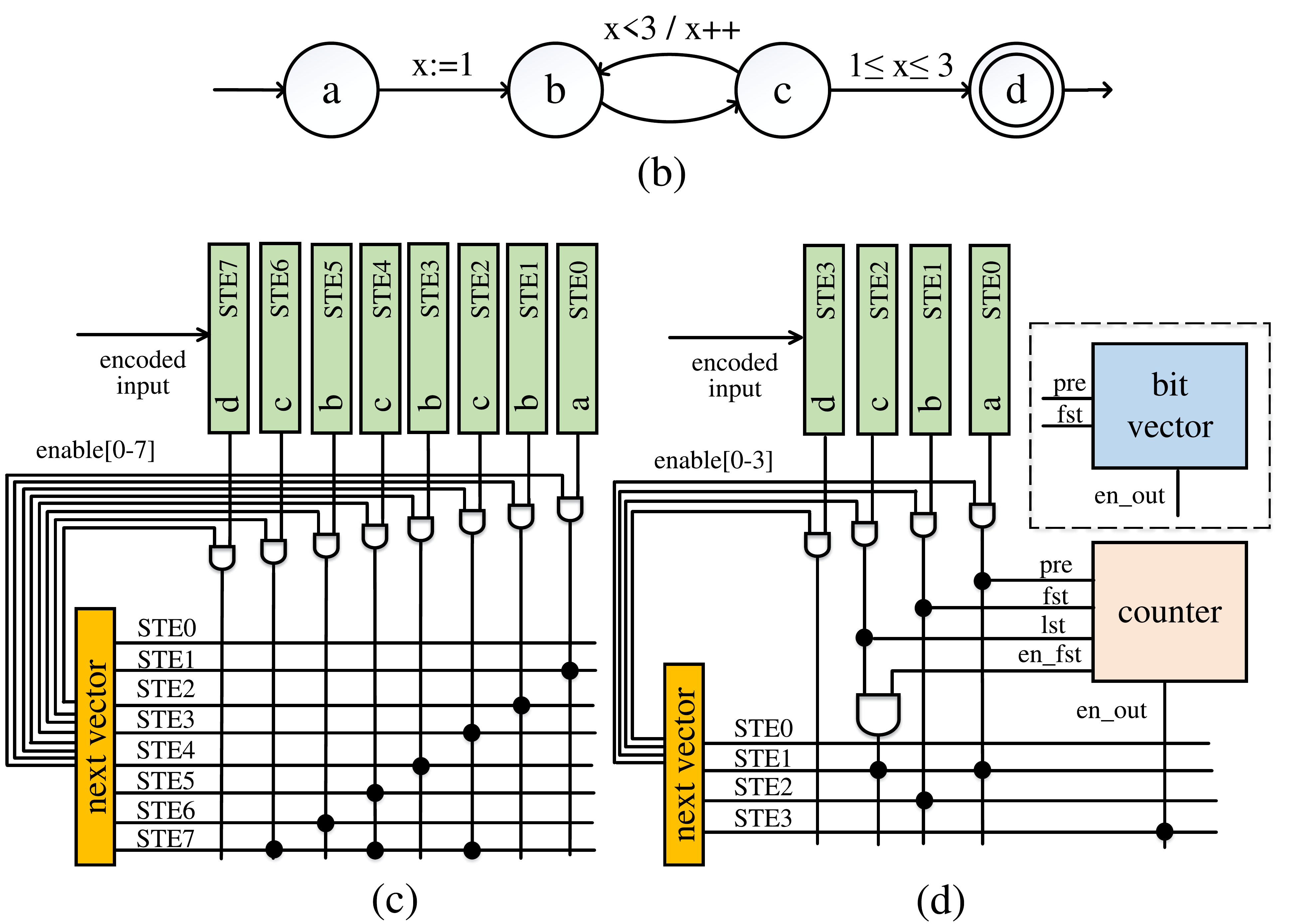}
\end{subfigure}
\caption{(a) Glushkov NCA for regex $a(bc)\{1,3\}d$. (b) Corresponding NCA with STEs. (c) Original hardware using unfolding. (d) Augmented hardware with counter or bit vector.}
\label{fig:hw-circuit}
\end{figure}

\begin{figure}
\includegraphics[width=\linewidth]{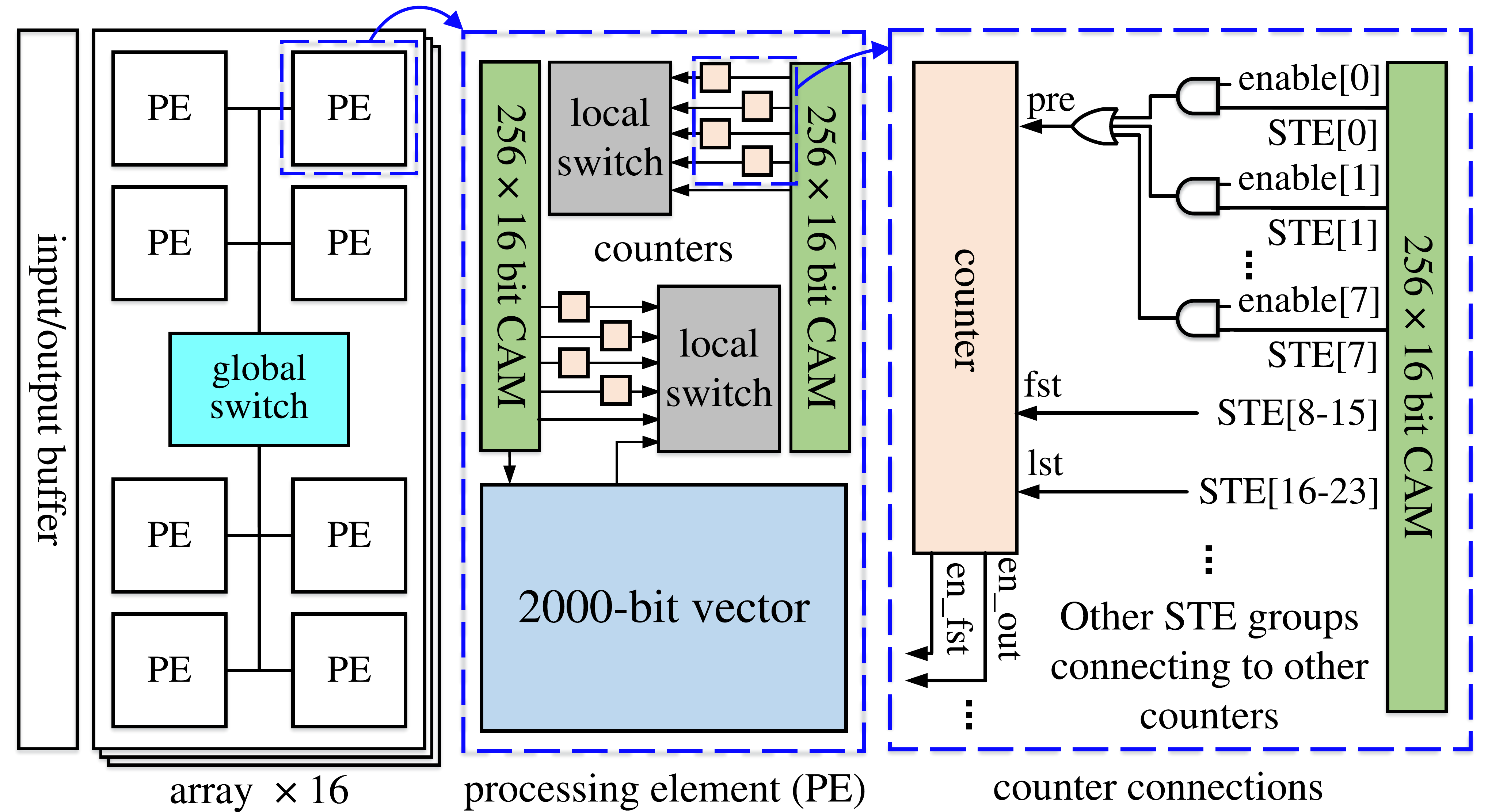}
\caption{Abstraction of proposed augmented CAMA bank, where PE is abbreviation for Processing Element.}
\label{fig:hw-arch}
\end{figure}

\paragraph{Software-Hardware Codesign}
The initial motivation for our hardware design came from the observation that several instances of bounded repetition require significantly less memory than what is suggested by a naive unfolding. This led to the formalization of counter-(un)ambiguity in NCAs and the corresponding static analysis. For the counter-unambiguous case, it suffices to use simple counter modules that keep track of the number of repetitions. For the counter-ambiguous case, the use of bit vectors is a very natural choice for a hardware representation of sets of tokens. These considerations led to the design of the counter and bit vector modules. Physical constraints imposed by the hardware call for minimizing the connections between STEs and the counting modules. For this reason, we have chosen to use bit vectors for counter-ambiguous repetitions of the form $\sigma\{m,n\}$ and use (partial) unfolding for other cases. The vast majority of counter-ambiguous repetitions in real-world benchmarks are of this form, so this approach offers efficiency (due to an optimized hardware implementation) without sacrificing generality (since the remaining cases can be handled at the level of the software/compiler).

\subsection{Compilation from Regex to MNRL}
\label{sec:toMNRL}

To program the hardware, we provide a description of the automata in the MNRL language \cite{mnrl}.
Our compiler takes a source regex and produces the MNRL file with the following steps:
(1) First, the compiler parses the regex and simplifies it with certain rewrite rules, including the unfolding of repetitions with upper bound $<2$ and the merging of character classes inside simple alternations (e.g., \texttt{[a]|[b]} is rewritten to \texttt{[ab]}).
(2) Then, the compiler performs the static analysis of \S\ref{sec:static_analysis} and annotates the regex with the counter-(un)ambiguity result for each occurrence of repetition.
(3) Finally, the compiler generates the MNRL file using these annotations, distinguishing cases where a counter suffices (counter-unambiguous) from cases where a bit vector is necessary (counter-ambiguous).

MNRL provides an element called \texttt{upCounter} for representing simple counters \cite{DlugoschBGLN2014AP, mnrl}.
However, there is no distinction between counter-ambiguous and counter-unambiguous repetition.
We have therefore extended the MNRL format by adding syntax for counters and bit vectors.

\begin{figure}
\includegraphics[width=.85\linewidth]{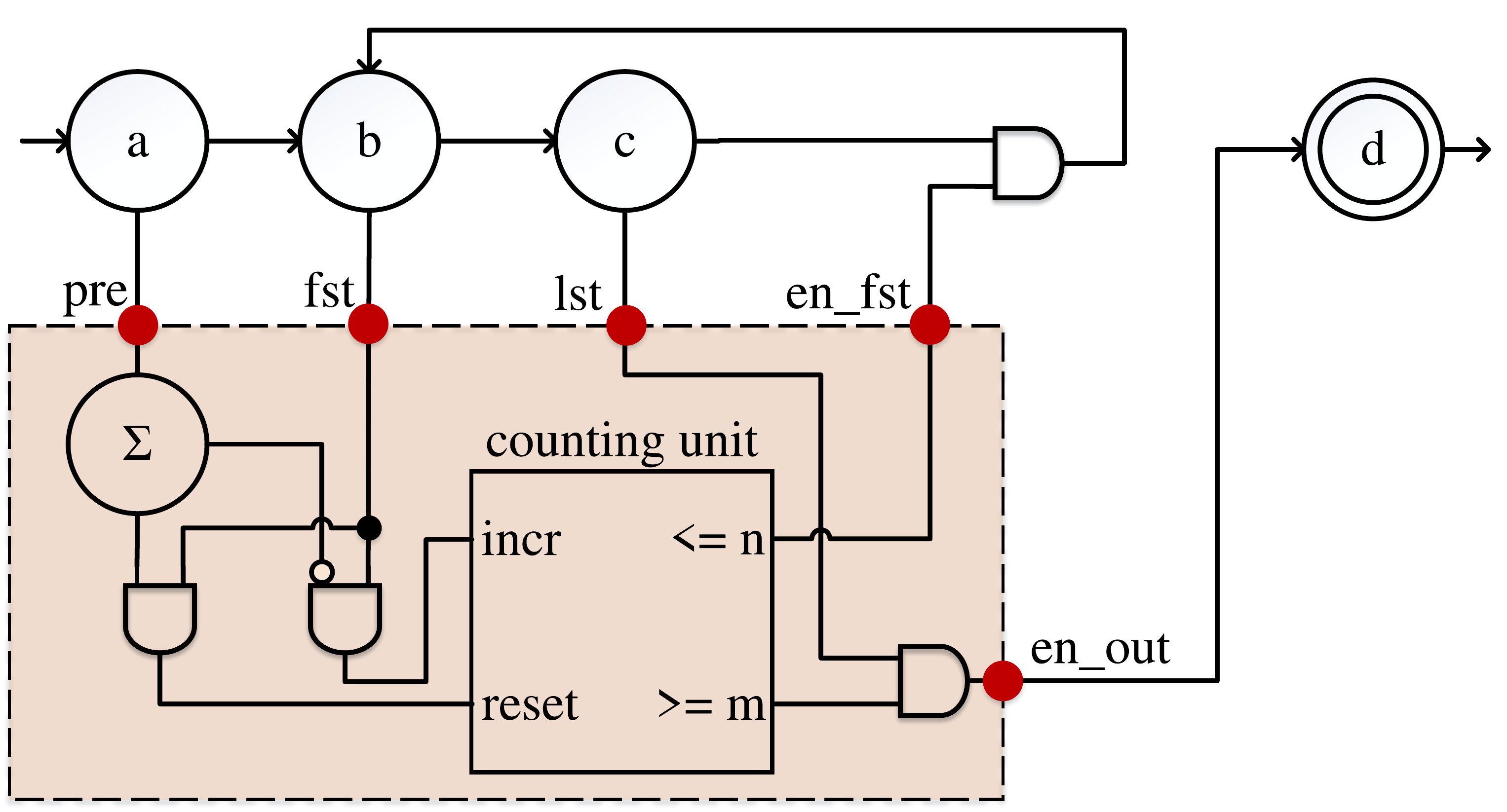}
\caption{Use of counter module to implement $a(bc)\{m,n\}d$.}
\label{fig:counter}
\end{figure}

Figure~\ref{fig:counter} presents an abstraction of the \textbf{\em counter module} (enclosed by a dashed line) by showing how it is used to implement the counter-unambiguous regex $a(bc)\{m,n\}d$ in hardware.
A counter has three incoming ports \texttt{pre}, \texttt{fst}, and \texttt{lst}, and two outgoing ports \texttt{en\_fst} and \texttt{en\_out}, where ports are labeled with red dots in Figure~\ref{fig:counter}.
The input port \texttt{pre} (i.e., pre-counting) is connected to the STE (labeled with $a$) located right before the repetition, \texttt{fst} (i.e., first) is connected to the first STE (labeled with $b$) in the repetition, and \texttt{lst} (i.e., last) is linked to the last STE (labeled with $c$) in the repetition.
The output port \texttt{en\_out} (i.e., enable output STE) activates the STE (labeled with $d$) located right after the repetition, and \texttt{en\_fst} (i.e., enable first STE) activates the first STE (labeled with $b$) in the repetition.
The counter module consists of a synchronous counting unit using D flip-flop and two digital comparators. The module is designed to meet four constraints:
(1) The counter value is reset to 0 when \texttt{pre} was active in the previous cycle and \texttt{fst} is currently active. This corresponds to the initialization of the repetition.
(2) The counter value is incremented by 1 when \texttt{fst} is active but \texttt{pre} was not active in the previous cycle. This corresponds to one complete cycle.
(3) \texttt{en\_out} fires if \texttt{lst} is active and the counter value is within the expected range (i.e., $[m,n]$).
(4) \texttt{en\_fst} fires if \texttt{lst} is active and the counter value is $\leq n$.

\begin{figure}
\includegraphics[width=.8\linewidth]{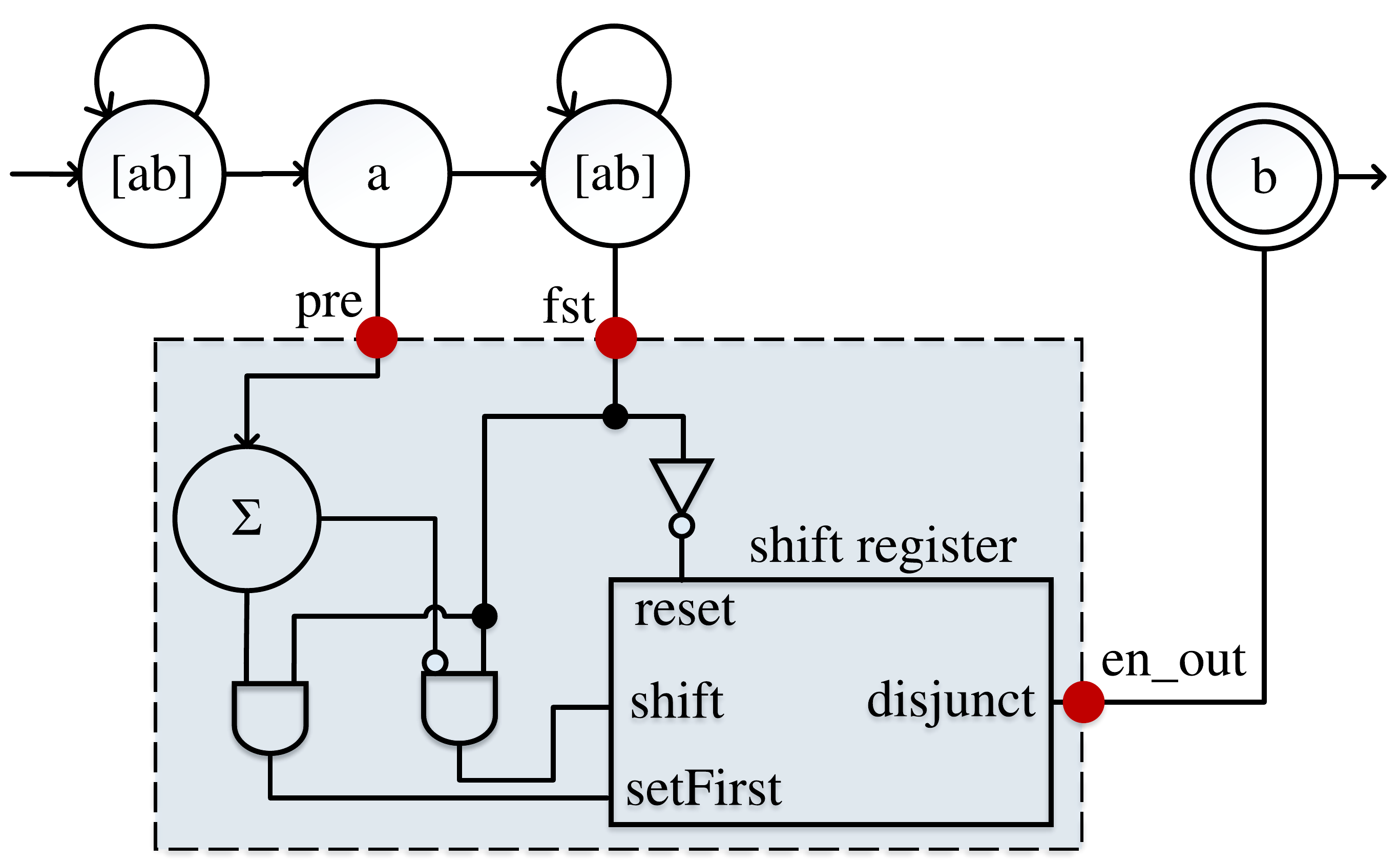}
\caption{Use of bit vector to implement $[ab]\kstar a [ab]\{m,n\} b$.}
\label{fig:bit-vector}
\end{figure}

Figure \ref{fig:bit-vector} presents an abstraction of the \textbf{\em bit vector module} by showing how the regex $[ab]\kstar a[ab]\{m,n\}b$ is implemented in hardware. 
The core component of the bit vector is a serial-in-parallel-out shift register. It supports four primary operations: (1) \texttt{reset}, which resets all bits in the vector to 0, (2) \texttt{setFirst}, which sets the first bit of the vector to 1, (3) \texttt{shift}, which shifts the vector by one bit, and (4) \texttt{disjunct}, which computes the disjunction of a sub-array of bits from index $m$ to $n$ (if one of the bits in the sub-array is 1, the output signal fires).

\subsection{Hardware Evaluation}

We modified the open-source simulator VASim \cite{ANMLZoo} to simulate the hardware performance of our augmented CAMA. 
We include 17-bit counters for supporting unambiguous counting, and 2000-bit vectors for supporting ambiguous counting, where the bit vector can be broken down to segments and used separately for counting with small upper bounds.
We use a TSMC 28nm CMOS technology and the industry-standard SPICE circuit simulator \cite{SPICE} to obtain the energy, delay, and area parameters of each component (Table \ref{table:hardwarecost}).
Since state transition is the critical path in CAMA, state matching and counter/bit-vector operations can be performed within a single clock cycle in the augmented CAMA, maintaining the same clock frequency of 2.14 GHz and throughput as CAMA-T (CAMA version optimized for high throughput) without performance penalties.

\begin{table}
\caption{Hardware component parameters}
\label{table:hardwarecost}
\vskip -2ex
\footnotesize
\centering
\renewcommand{\arraystretch}{1.1}
\begin{tabular}{|crrr|} 
\hline
Component &  \multicolumn{1}{c}{Energy (fJ)} &  \multicolumn{1}{c}{Delay (ps)} &  \multicolumn{1}{c|}{Area ($\mu m^2$)}
\\ \hline
CAMA Bank & 16780 & 325 & 3919 \\
17-bit counter & 288 & 101 & 237 \\ 
2000-bit vector & 3340 & 71 & 6382 \\
\hline
\end{tabular}
\end{table}

\begin{figure}
\includegraphics[width=.8\linewidth]{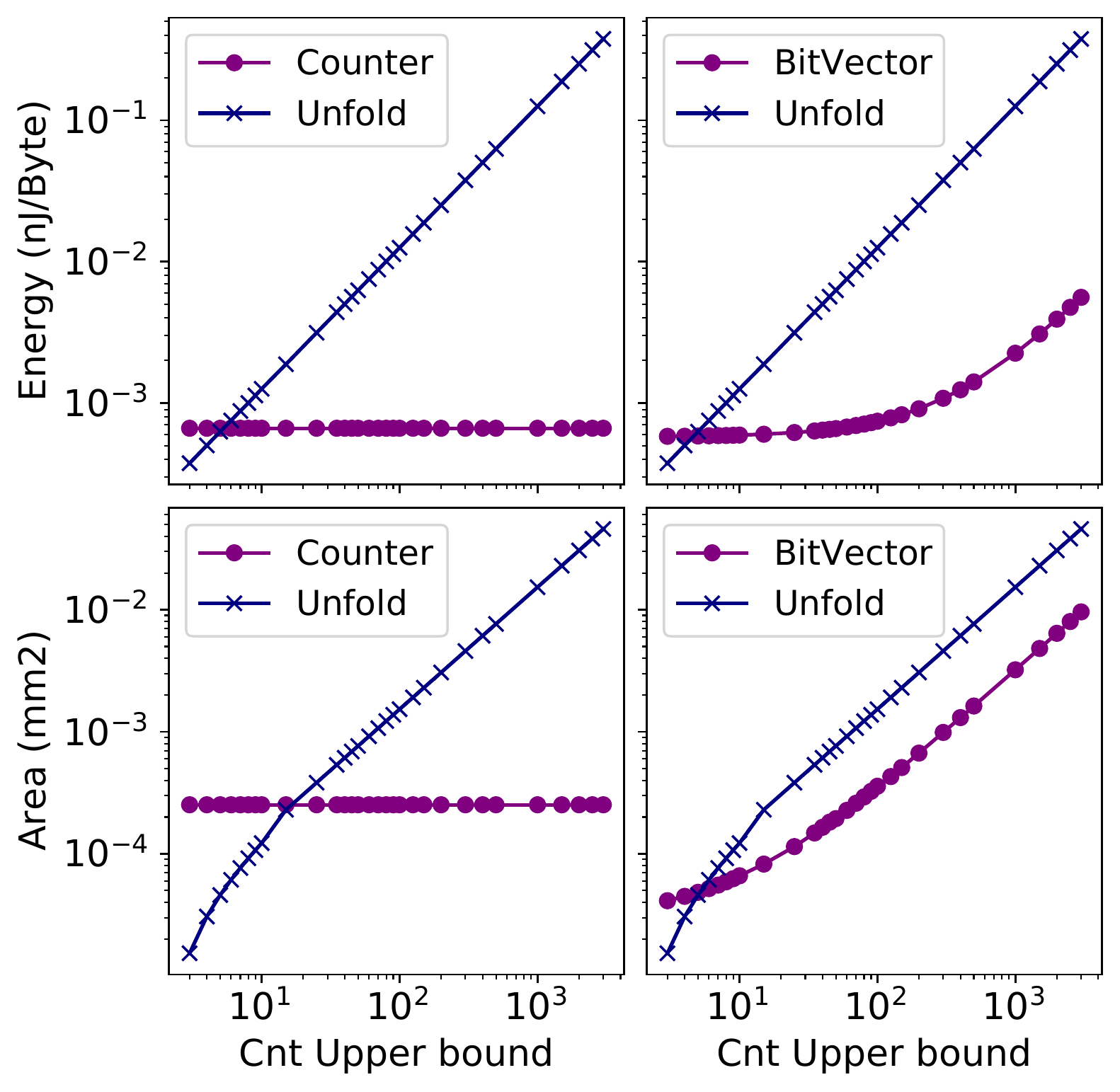}
\caption{Energy (upper two figures) and area (bottom two) trade-off of unfolding vs using counter (left two figures) and bit vector (right two), where axis is log-scaled.}
\label{fig:microbench}
\end{figure}

\begin{figure}
\includegraphics[width=.85\linewidth]{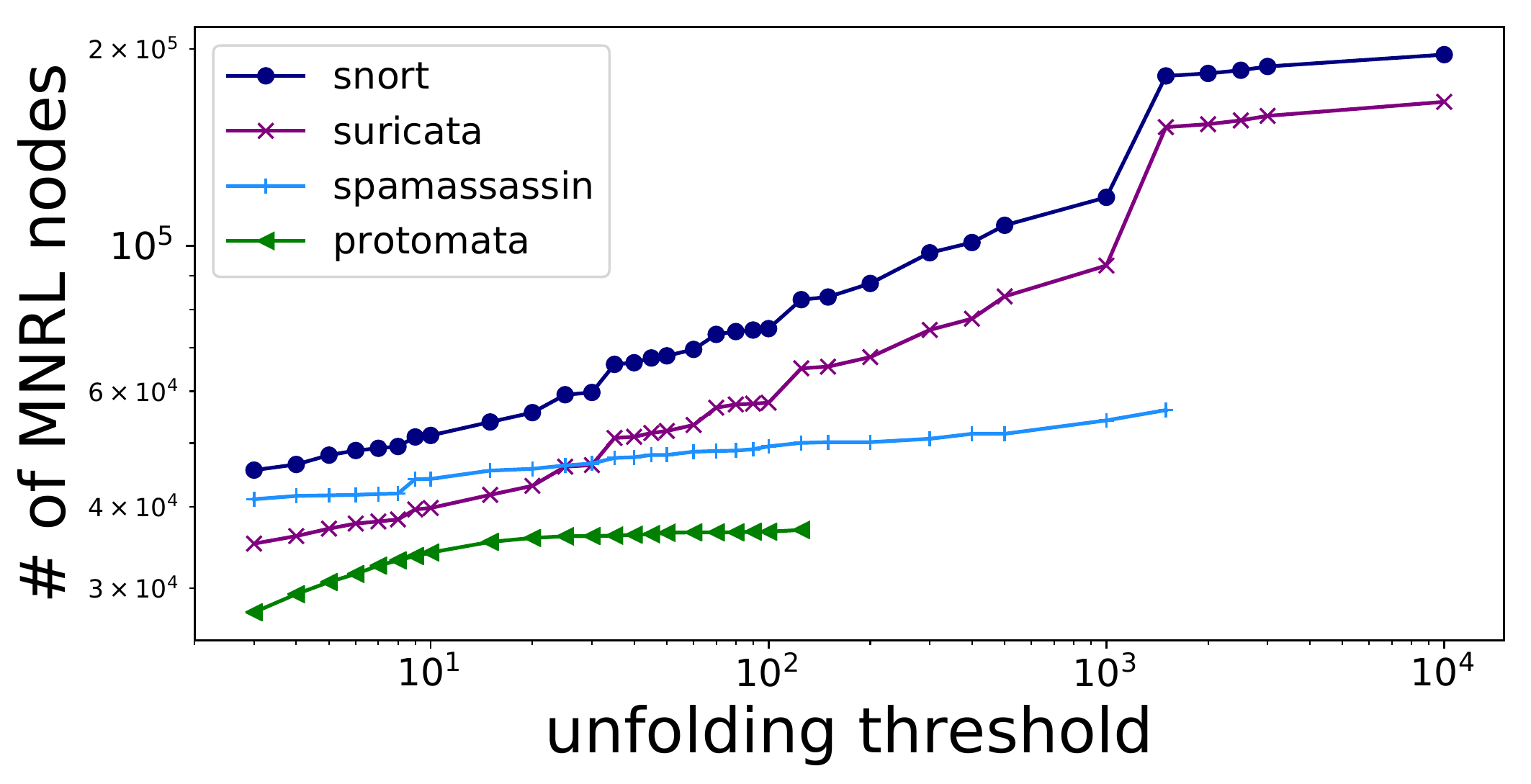}
\caption{Total number of MNRL nodes with different unfolding thresholds (both axes are log-scaled).}
\label{fig:exp-mnrl-nodes-cnt}
\end{figure}

\begin{figure*}
\includegraphics[width=.8\textwidth]{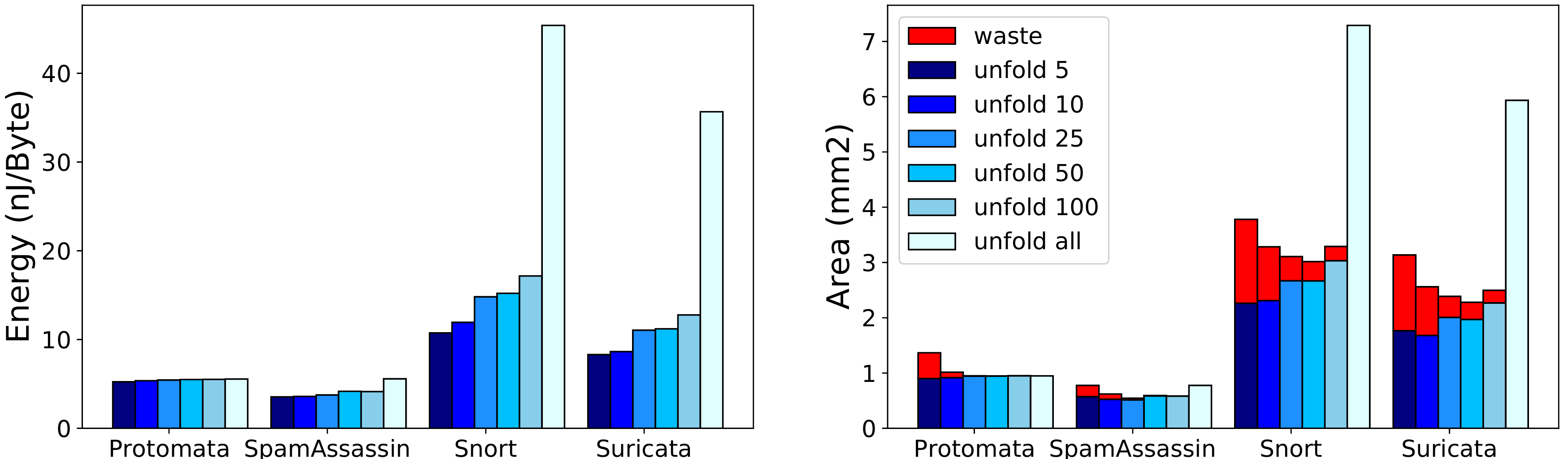}
\caption{Per-input-byte energy consumption (left) and total area cost (right) of the augmented CAMA hardware}
\label{fig:exp-mnrl-energy-memory}
\end{figure*}

\paragraph{Micro-benchmarks.}

Figure~\ref{fig:microbench} shows the trade-off of unfolding vs. using counter and bit vector modules. 
In the left two sub-figures, we consider regexes $a\{n\}$ with different values of $n$.
These regexes are counter-unambiguous -- the hardware implementation only needs a single counter module to perform the matching, while unfolding creates $n$ STEs.
The upper-left (resp., bottom-left) sub-figure shows the energy (resp., area) cost of using a counter module compared with unfolding, where we always use a 17-bit counter module to represent counter values regardless of their different repetition bounds.
In the right two sub-figures, we consider regexes $\Sigma\kstar a\{n\}$. 
These regexes are counter-ambiguous, so the hardware needs to use a bit vector to perform matching, while unfolding creates $n$ STEs.
In this comparison, we set the length of the bit vector to be equal to $n$ for each data point (this implies that bits are wasted).
The upper-right (resp., bottom-right) sub-figure shows the energy (resp., area) cost of using a bit vector compared with unfolding.
From the results shown in Figure~\ref{fig:microbench}, we observe that using a counter/bit vector provides better performance compared to unfolding even for repetitions with small upper bounds. It consistently reduces energy usage by orders of magnitude and areas by large margins.

\paragraph{Application benchmarks.}

We use the same benchmarks as described in Section~\ref{sec:static_analysis_performance} (except for ClamAV). Figure~\ref{fig:exp-mnrl-nodes-cnt} shows the number of MNRL nodes (which is linear in the number of STEs) for different unfolding thresholds. For each benchmark and each point in the corresponding curve, the x coordinate is an unfolding threshold $k$ and the y coordinate is the number of MNRL nodes that are obtained from compiling the entire benchmark after bounded repetitions up to $k$ have been unfolded. The rightmost point on each benchmark curve shows the unfolding threshold that results in full unfolding for all regexes of the benchmark and the resulting number of MNRL nodes.

We have simulated the area and the energy consumption of our augmented CAMA by feeding compiled MNRL files with different unfolding thresholds to the modified VASim.
Figure~\ref{fig:exp-mnrl-energy-memory} shows the per-input-byte energy consumption and the total area cost of the augmented CAMA.
The results show up to 76\% energy reduction and 58\% area reduction in benchmarks with an abundance of instances of bounded repetition with large upper bounds (i.e., Snort and Suricata). In benchmarks that generally include bounded repetitions with small upper bounds (i.e., Protomata and SpamAssassin), the augmented CAMA hardware still outperforms pure CAMA with little to no overhead.
We observe that for the Protomata and SpamAssassin benchmarks, our hardware implementation provides less energy and area reduction compared with Snort and Suricata. This is because, in general, the regexes in Protomata and SpamAssassin have small repetition upper bounds. The wasted area in Figure~\ref{fig:exp-mnrl-energy-memory} corresponds to unused bits in the bit vector modules.

\section{Related Work}
\label{sec:related_work}

There is a rich set of prior works that define \textbf{\em (un)ambiguity on regular expressions}. 
Book et al. \cite{BookEGO71} have defined unambiguous regexes using Glushkov automata \cite{Glushkov1961Abstract}.
Bruggemann-Klein and Wood have expressed the related notions of \emph{deterministic} \cite{Bruggemann-KleinW92} and \emph{1-unambiguous} \cite{Bruggemann-KleinW98} regexes. 
Hovland \cite{Hovland09} has defined the class of \emph{counter-1-unambiguous} for regexes with counting.
Hovland et al. \cite{HovlandDM12} have further considered a \emph{strongly 1-unambiguous} class where the membership problem, for regexes with counting and unordered concatenations, can be solved in polynomial time. Gelade et al. \cite{GeladeGM09} have defined \emph{strong} and \emph{weak determinism} and shown that weakly deterministic regexes are exponentially more succinct than the strongly deterministic ones. A survey of unambiguity in automata theory can be found at \cite{Colcombet15}.

Several different automata models and automata-based techniques have been proposed to handle \textbf{\em the matching of regexes with counting}.
DFAs and NFAs have been extended by \cite{HolikLSTVV19} and \cite{BecchiC08} respectively by introducing counting operations and guards as an alternative to unfolding for large repetition bounds.
An implementation of a class of counter automata, proposed in \cite{TuronovaHLSVV2020}, is based on queues for representing sets of counter values.
%
%
A variety of software regex matchers, including RE2 \cite{Cox10,RE2}, Rust's Regex \cite{RustRegex}, PCRE \cite{PCRE}, SRM \cite{SaarikiviVWX2019}, and Hyperscan \cite{WangHCPLHZ2019Hyperscan} support the matching of regexes with counting.
These matchers are typically based on the execution of DFAs or NFAs. Matchers like RE2 and SRM unfold constrained repetitions when performing on-the-fly determinization or computing derivatives.

A series of \textbf{\em ASIC hardware architectures} \cite{TuckNS04, BrodieBT06} have been designed to reach high throughput for network applications relying on pattern matching algorithms. The IBM regX \cite{VanLJ2012} accelerator extends the idea of representing regexes with compressed DFAs \cite{BecchiC08, YangYP2011, NakaharaHS11}, which are hybrids between DFAs and NFAs, and its parallelized architecture improves performance on large workloads. Dlugosch et al. \cite{DlugoschBGLN2014AP} designed the Automata Processor (AP), a reconfigurable ASIC hardware based on bit-parallellism \cite{Baeza89} that simulates NFAs in parallel.
Liu et al. \cite{LIKPJ2018EfficientAP} developed SparseAP to provide support for AP to efficiently execute large-scale applications.
AP can support many regexes found in real-life applications \cite{ANMLZoo, WaddenJT18}. However, it provides restricted support for regexes with counting (when upper bounds are larger than 512 they are considered unbounded \cite{RoyIS19}).
Other major ASIC works are based on the Aho-Corasick algorithm \cite{AhoAC75} including \cite{TuckNS04}, HAWK \cite{tandonPS16}, and HARE \cite{GogteKCdAW2016HARE}. 
They compute partial matches for all possible alignments and merge them to find a global match. HARE achieves a 32Gbps throughput but has limited support for Kleene operators (which only allow single character class repetition), and it provides no support for unbounded counting.

Many prior works \cite{SidhuRP01, BakerZP04} focus on \textbf{\em FPGA and GPU hardware architectures} to take advantage of their configurability and parallelism. 
\cite{YangYJ08} and \cite{SourdisIB08} provide support for regexes with counting on FPGA hardware. 
\cite{WangHP10} extends the DFA ambiguity expressed in \cite{SmithRE08} to NFA with counters by defining the \textit{character class ambiguity}, a problem that arises when the intersection between two adjacent character class with constraint repetitions (CCR) is non-empty. 
A min-max algorithm with two counters for every CCR keeps track of all possible matches. 
Our notion of counter-ambiguity is formulated more generally, and our simulation based on bit vectors handles character class ambiguity.
Finally, there are several works that implement regex matching algorithms on GPUs \cite{CascaranoNR10, ZuYY12, VasiliadisGP09, LiuPJ2020}.

\section{Conclusion}
\label{sec:conclusion}

We have investigated hardware acceleration for regular pattern matching, where the patterns are specified by regexes with an extended syntax that involves bounded repetitions of the form $r\{m,n\}$.
We have developed a design that integrates counter and bit vector modules into an in-memory NFA-based hardware architecture.
This design is inspired from the theoretical model of nondeterministic counter automata (NCAs) and the observation that some instances of bounded repetitions require only a small amount of memory. We formalize this idea using the notion of counter-unambiguity.
We have implemented a regex-to-hardware compiler that performs a static analysis for counter-(un)ambiguity over a regex and then creates a representation of an automaton with counters and bit vectors that can be deployed on the hardware.
Our experiments show that using counters and bit vectors outperforms unfolding solutions by orders of magnitude. Moreover, in experiments with realistic workloads, we have observed that our design can provide up to 76\% energy reduction and 58\% area reduction in comparison to CAMA \cite{cama}, a state-of-the-art in-memory NFA processor.

\begin{acks}                            
  We would like to thank the anonymous reviewers for their constructive comments. This research was supported in part by the US National Science Foundation award CCF 2008096 and the Rice University Faculty Initiative Fund.

\end{acks}

\bibliography{refs}
\end{document}